\title{Up-to Techniques for Branching Bisimilarity\thanks{The research of the second author was supported by a 
    Marie Curie Fellowship (grant code 795119). An extended abstract 
    has been published in the proceedings of SOFSEM 2020, see
    \url{https://doi.org/10.1007/978-3-030-38919-2_24}
    }}
\author{Rick Erkens\inst{1} \and
    Jurriaan Rot\inst{2,3} \and
    Bas Luttik\inst{1}}
\institute{Eindhoven University of Technology, The Netherlands \and
University College London, UK \and
Radboud University Nijmegen, The Netherlands}
\begin{document}
\maketitle
\begin{abstract}
Ever since the introduction of behavioral equivalences on processes one has been searching
for efficient proof techniques that accompany those equivalences.
Both strong bisimilarity and weak bisimilarity are accompanied by an arsenal of up-to techniques:
enhancements of their proof methods.
For branching bisimilarity, these results have not been established yet.
We show that a powerful proof technique is sound for branching bisimilarity
by combining the three techniques of up to union, up to expansion and up to context for
Bloom's BB cool format. We then make an initial proposal for casting the
correctness proof of the up to context technique in an abstract coalgebraic setting, 
covering branching but also $\eta$, delay and weak bisimilarity.
\end{abstract}

\section{Introduction}

Bisimilarity is a fundamental notion of behavioral equivalence between processes~\cite{milner:comm}.
To prove that processes $P,Q$ are bisimilar it suffices to give a bisimulation relation $\cR$
containing the pair $(P,Q)$. But bisimulations 
can become quite large, which makes proofs long.
To remedy this issue,
up-to techniques were proposed~\cite{milner:comm,sangiorgi:proof}.
They are used, for example, in the $\pi$-calculus, where even simple properties about the replication operator
are hard to handle without them~\cite{sangiorgi:pi}, but also in automata theory~\cite{BonchiP15}
and other applications, see~\cite{pous:enhancements,BonchiPPR17} for an overview.

For weak bisimilarity the field of up-to techniques is particularly delicate.
Milner's weak bisimulations up to weak bisimilarity cannot be used to prove weak bisimilarity~\cite{sangiorgi:problem}
and the technique of up-to context is unsound for many process algebras,
most notably some that use a form of choice. Up-to techniques for weak bisimilarity
have been quite thoroughly studied (e.g.,~\cite{Pous07,pous:enhancements}). The question 
remains whether such techniques apply also to other weak
equivalences. 

In this paper, we study branching, delay and $\eta$ bisimilarity, and propose general criteria for the validity
of two main up-to techniques. We make use of the general framework
 of enhancements due to Pous and Sangiorgi~\cite{sangiorgi:proof,pous:enhancements},
 and prove that the relevant techniques are \emph{respectful}: this allows
 to modularly combine them in proofs of bisimilarity (recalled in Section~\ref{sec:abstractframework}). 
 
We start out by recasting the up-to-expansion technique, which has been proposed to remedy
certain issues in up-to techniques for weak bisimilarity~\cite{sangiorgi:problem}, 
to branching bisimilarity. 
Then, we study up-to-context techniques, which can significantly simplify bisimilarity
proofs about processes generated by transition system specifications. 
Up-to context is not sound in general, even for strong bisimilarity. For the latter,
it suffices that the specification is in the GSOS format~\cite{BonchiPPR17}. 
For weak bisimilarity one needs stronger assumptions. It was shown in~\cite{BonchiPPR17}
that Bloom's \emph{simply WB cool format}~\cite{BloomCT88:GSOS,glabbeek:cool} gives the validity of up-to context. 
We adapt this result to branching, $\eta$ and delay bisimilarity, making use of each
of the associated ``simply cool'' formats introduced by Bloom. These were introduced
to prove congruence of weak equivalences; our results extend
to respectfulness of the up-to-context technique, which is strictly stronger 
in general~\cite{pous:enhancements}.

For the results on up-to-context, we give both a concrete proof for the case of branching bisimilarity,
and a general coalgebraic treatment that covers weak, branching, $\eta$ and delay in a uniform manner (Section~\ref{sec:coalg}). 
This is based on, but also simplifies the approach in~\cite{BonchiPPR17}, by focusing on (span-based) simulations,
avoiding technical intricacies in the underlying categorical machinery. 
Our coalgebraic results are essentially about respectfulness of simulation, suitably instantiated to weak
simulations and subsequently extended to bisimulations via the general framework of~\cite{pous:enhancements}.
We conclude with some directions for future work in Section~\ref{sec:conc-fw}.

\section{Preliminaries}\label{sec:prelims}

A Labelled Transition System (LTS) is a triple $(\states,\act,\trans)$
where $\states$ is a set of states,
$\act$ is a set of actions with $\tau \in \act$
and ${\trans} \subseteq \states \times \act \times \states$ is a set of transitions.
We denote a transition $(P,\alpha,P')$ by $P \step{\alpha} P'$.
For any $\alpha$ we consider $\step{\alpha}$ a binary relation on $\states$.
With this in mind let $\tausteps$ denote the transitive reflexive closure of $\taustep$.
By $P \step{(\alpha)} P'$ we mean that $P \step{\alpha} P'$ or $\alpha=\tau$ and $P=P'$.
The capital letters $P,Q,X,Y,Z$ range over elements of $\states$.
The letters $\alpha, \beta$ denote arbitrary elements from $\act$
and with lowercase letters $a$ we denote arbitrary elements of $\act  \backslash \{\tau\}$, so the action $a$ is not a silent action.

The set of relations between sets $X$ and $Y$ is denoted by $\Rel_{X,Y}$;
when $X=Y$ we denote it by $\Rel_X$, ranged over by $\cR, \cS$.
Relation composition is denoted by $\cR \relcmp \cS = \{(P,Q) \mid \exists X. \, P \rcR X
    \text{ and }X \rcS Q\}$, or simply by $\cR \cS$. 
For any set $X$, the partial order $(\Rel_X, \subseteq)$ forms a complete lattice,
where the join and meet are given by union $\bigcup X$ and intersection $\bigcap X$
respectively. 
%
A function $f \colon \Rel_X \rightarrow \Rel_X$ is monotone iff $\cR \subseteq \cS$ implies $f(\cR) \subseteq f(\cS)$.
The set $[\Rel_X \rightarrow \Rel_X]$ of such monotone functions is again a complete lattice, ordered by
pointwise inclusion, which we denote by $\leq$. Thus, join and meet are 
pointwise: $\bigvee F = \lambda \cR.\bigcup\{f(\cR) \mid f \in F\}$
and $\bigwedge F = \lambda \cR.\bigcap\{f(\cR) \mid f \in F\}$.

\paragraph{Bisimulation.}
Consider the function
$\brsmono(\rcR)$ = $\{(P,Q) \mid$ for all $P'$ and for all $\alpha$, if $P \step{\alpha} P'$ then there exist $Q', Q''$ s.t. 
								 $Q \tausteps Q' \step{(\alpha)} Q''$ and $P \rcR Q'$ and $P'\rcR Q'' \} $. 
We say that $\cR$ is a \emph{branching simulation} if $\cR \subseteq \brsmono(\cR)$.
Moreover we define $\brmono = \brsmono \meet (\rev \circ \brsmono \circ \rev)$ where
$\rev(\cR) = \{(Q,P) \mid P \rcR Q\}$ and say that
$\cR$ is a \emph{branching bisimulation} if $\cR \subseteq \brmono(\cR)$. 
Since $\brsmono$ and hence $\brmono$ are monotone and $(\relp, \subseteq)$ is a complete lattice,
$\brmono$ has a greatest fixed point.
We denote it by $\brbis$ and refer to it as \emph{branching bisimilarity}.
To prove $P \rbrbis Q$, it suffices to provide a relation $\cR$
that contains the pair $(P,Q)$ and show that $\cR \subseteq \brmono(\cR)$; the latter implies $\cR \subseteq {\brbis}$. 
Up-to techniques strengthen this principle (Section~\ref{sec:abstractframework}).
%

Delay (bi)similarity is defined analogously through the function $\dsmono$, defined as
$\brsmono$ but dropping the condition $P \rcR Q'$. 
Weak simulations are defined using the map
$\wsmono(\cR) = \{(P,Q) \mid$ for all $P'$ and for all $\alpha$, if $P \step{\alpha} P'$ then there exist $Q', Q'',Q'''$
                                such that $Q \tausteps Q' \step{(\alpha)} Q''\tausteps Q'''$ and $P'\rcR Q''' \}$.
Finally, for $\eta$ simulation, we have $\etasmono$, defined as $\wsmono$ but adding the requirement
$P \rcR Q'$.

Intuitively, the four notions of bisimilarity defined above vary
  in two dimensions: first, \emph{branching} and \emph{delay}
  bisimilarity consider internal activity (represented by
  $\tau$-steps) only before the observable step, whereas $\eta$ and
  \emph{weak} bisimilarity also consider internal activity after the
  observable step; second, \emph{branching} and $\eta$ bisimilarity
  require that the internal activity does not incur a change of state,
  whereas for \emph{delay} and \emph{weak} this is not required.

%
%



\paragraph{GSOS and Cool Formats.}
%
\emph{GSOS} is a rule format that guarantees strong bisimilarity
to be a congruence~\cite{BloomCT88:GSOS}. Bloom introduced \emph{cool languages} 
as restrictions of GSOS, forming suitable formats for 
weak, branching, $\eta$ and delay bisimilarity~\cite{bloom:structural}.


A signature $\sig$ is a set of operators that each have an arity
denoted by $\arity(\sigma)$.
We assume a set of variables $\vars$ and denote the set of terms over a signature $\sig$ by $\terms(\sig)$.
For a term $t$ we denote the set of its variables by $\varsof(t)$.
A term $t$ is closed if $\varsof(t)=\emptyset$.
A \emph{substitution} is a partial function ${\rho:\vars \rightharpoonup\terms(\sig)}$.
We denote the application of a substitution to a term $t$ by $t^\rho$.
A substitution is \emph{closed} if $\rho$ is a total function such that $\rho(x)$ is closed
for all $x\in\vars$.

\begin{definition}\label{def:gsos}
A positive GSOS language is a tuple $(\sig,\rules)$ where $\sig$ is a signature and
$\rules$ is a set of transition rules of the form $\frac{H}{\sigma(x_1,\dots,x_{\arity(\sigma)}) \step{\alpha} t}$
where $t$ is a term, $x_1,\dots,x_{\arity(\sigma)}$ are distinct variables and $H$ is a set of premises such that
each premise in $H$ is of the form $x_i \step{\beta} y_i$
    where the left-hand side $x_i$ occurs in $x_1,\dots,x_{\arity(\sigma)}$;
the right-hand sides $y_i$ of all premises are distinct;
the right-hand sides $y_i$ of all premises do not occur in $x_1,\dots,x_{\arity(\sigma)}$; 
the target $t$ only contains variables that occur in the premises or in the source.
\end{definition}

The (not necessarily positive) GSOS format also allows negative premises,
that is, premises of the form $\smash{x_i \not\step{\beta}}$.
In this paper we do not consider those.

  An \emph{LTS algebra} for a signature $\sig$
  consists of an LTS  $(\states,\act,\trans)$ together with a
  $\sig$-indexed family of mappings on $\states$ of corresponding arity. We
  denote the mapping associated with an element of
  $\sig$ by the same symbol, i.e., for all $\sigma\in\Sigma$ there
  is a map $\sigma: \states^{\arity(\sigma)}\rightarrow\states$.
  If $t\in\terms(\sig)$ and
  $\rho:\vars\rightarrow\states$ is an assignment of states to
  variables, then we denote by $t^\rho$ the interpretation of $t$ in
  $\states$.

  Now, let $\lang=(\sig,\rules)$ be a GSOS language. Then an LTS
  algebra for $\sig$ is a \emph{model} for $\lang$ if it satisfies the
  rules in $\rules$, i.e., if for every rule
    $\frac{H}{\sigma(x_1,\dots,x_n)\step{\alpha}t} \in \rules$
    and for every assignment $\rho$ we have that
    whenever
      $\rho(x_i)\step{\beta_i}\rho(y_i)$ for every premise $x_i
      \step{\beta_i} y_i \in H$
    then also
      $\sigma(\rho(x_1),\dots,\rho(x_n))\step{\alpha} t^{\rho}$.
 
  The \emph{canonical model} for a GSOS language $\lang=(\sig,\rules)$ has
   as states the set of closed $\sig$-terms and for all closed
   terms $P$ and $P'$, a transition
   $P\step{\alpha}P'$ if, and only if, there is a rule
     $\frac{H}{\sigma(x_1,\dots,x_n)\step{\alpha}t} \in \rules$
     and a substitution $\rho$ such that
       $P = \sigma(\rho(x_1),\dots,\rho(x_n))$,
       $P' = t^\rho$,
      and $\rho(x_i)\step{\beta_i}\rho(y_i)$ for all premises $x_i
      \step{\beta_i} y_i \in H$. The mapping associated with an
      $n$-ary element $\sigma\in\sig$ maps every sequence
      $t_1,\dots,t_n$ for closed terms to the closed term $\sigma(t_1,\dots,t_n)$. 

Bloom's \emph{cool formats} \cite{bloom:structural} rely on some auxiliary notions. 
  A rule of the form $\frac{x_i \step{\tau} y_i}{\sigma(x_1,\dots,x_n) \step{\tau}
  \sigma(x_1,\dots,y_i,\dots,x_n)}$ is called a \emph{patience rule} for
the $i$th argument of $\sigma$.
 A rule is \emph{straight} if the left-hand sides of all premises
  are distinct. A rule is \emph{smooth} if, moreover, 
    no variable occurs both in the target and the left-hand
    side of a premise.
 The $i$th argument of $\sigma \in \Sigma$ is \emph{active} if there is a rule $\frac{H}{\sigma(x_1,\dots,x_n)\step{\alpha}t}$
  in which $x_i$ occurs at the left-hand side of a premise.
A variable $y$ is \emph{receiving} in the target $t$ of a rule $r$ in
    $\lang$  if it is the right-hand side of a
    premise of $r$. The $i$th argument of $\sigma \in \Sigma$ is \emph{receiving} if there is a variable $y$ and a target $t$ of a rule in $\lang$
    s.t.\ $y$ is receiving in $t$, $t$ has a subterm
    $\sigma(v_1,\dots,v_n)$ and $y$ occurs in $v_i$.

For instance, CCS~\cite{milner:comm} has the rule $\frac{x_1 \step{\alpha} y_1}{x_1 + x_2 \step{\alpha} y_1}$
for the binary choice operator $+$.
The first argument is active,
but the semantics does not allow a patience rule for it.
The issue can be mitigated by guarded sums, replacing choice 
by infinitely many rules of the form $\Sigma_{i\in I}: \alpha_i.x_i \step{\alpha_i} x_i$.
These rules have no premises; therefore there are no active arguments and no patience rule is needed.

\begin{definition}\label{def:formats}
A language $\lang=(\sig,\rules)$ is \emph{simply WB cool} if it is positive GSOS and
\begin{enumerate*}
\item\label{item:coolstraight} all rules in $\lang$ are straight;
\item\label{item:cooltaupatience} only patience rules have $\tau$-premises;
\item\label{item:coolactive} for each operator every active argument has a patience rule;
\item\label{item:coolreceivingpatience} every receiving argument of an operator has a patience
    rule; and
\item\label{item:coolsmooth}  all rules in $\lang$ are smooth.
\end{enumerate*}
The language $\lang$ is \emph{simply BB cool} if it satisfies
   \ref{item:coolstraight}, \ref{item:cooltaupatience}, and
  \ref{item:coolactive}.
  It is \emph{simply HB cool} if it satisfies \ref{item:coolstraight}, \ref{item:cooltaupatience},
  \ref{item:coolactive}, and \ref{item:coolreceivingpatience}. It is \emph{simply DB cool} if it
satisfies \ref{item:coolstraight}, \ref{item:cooltaupatience},
  \ref{item:coolactive}, and \ref{item:coolsmooth}.
\end{definition}



In \cite{glabbeek:cool}, van Glabbeek presents four lemmas, labelled BB, HB, DB and WB, respectively, that are instrumental for proving that branching, $\eta$, delay and
  weak bisimilarity are congruences for the associated variants of
  cool languages. In \cite{glabbeek:cool} these lemmas are established
  for the canonical model, but they have straightforward
  generalisations to arbitrary models; these generalisations will be instrumental  for our results
  in Sections~\ref{sec:bb-up-to} and \ref{sec:coalg}. We only present the
  generalisations of WB and BB here; the generalisations of HB and DB
  proceed analogously.

\begin{lemma}\label{lem:superrobweak}
    Let $\lang$ be a simply WB cool language,
    let $(\states,\act,\trans)$ be a model for $\lang$,
    let $\eta: \vars\rightarrow\states$ be an assignment
    and let $\frac{H}{\sigma(x_1,\dots,x_n)\step{\alpha} t}$ be a rule in
    $\lang$.
    If for each $\smash{x\step{\beta}y}$ in $H$ we have
    $\smash{\eta(x)\tausteps\step{(\beta)}\tausteps\eta(y)}$, then
    $\smash{\sigma(x_1,\dots,x_n)^\eta \tausteps\step{(\alpha)}\tausteps t^\eta}$.
\end{lemma}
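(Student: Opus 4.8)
The plan is to reduce the required weak transition to a single application of the rule $r = \frac{H}{\sigma(x_1,\dots,x_n)\step{\alpha}t}$ in the model, sandwiched between stretches of $\tau$-steps obtained by iterating patience rules. First I would dispose of the case in which $r$ is itself a patience rule for, say, the $j$-th argument of $\sigma$: then $H=\{x_j\step{\tau}y_j\}$, $\alpha=\tau$ and $t=\sigma(x_1,\dots,x_{j-1},y_j,x_{j+1},\dots,x_n)$, the hypothesis collapses to $\eta(x_j)\tausteps\eta(y_j)$, and firing the patience rule once for each $\tau$-step along this path gives $\sigma(x_1,\dots,x_n)^\eta\tausteps t^\eta$, a special case of the claim.

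So assume $r$ is not a patience rule. Then condition~\ref{item:cooltaupatience} forces every premise to carry a visible label, so for each premise $x_i\step{a_i}y_i\in H$ the hypothesis provides states $Q_i,Q_i'$ with $\eta(x_i)\tausteps Q_i\step{a_i}Q_i'\tausteps\eta(y_i)$. The argument then has three moves. \emph{(i)} Since $x_i$ is the left-hand side of a premise of $r$, the $i$-th argument of $\sigma$ is active, hence has a patience rule by condition~\ref{item:coolactive}; straightness (condition~\ref{item:coolstraight}) makes the left-hand sides of the premises pairwise distinct, so iterating these patience rules one argument at a time lifts the paths $\eta(x_i)\tausteps Q_i$ to $\sigma(x_1,\dots,x_n)^\eta\tausteps\sigma(\hat x_1,\dots,\hat x_n)$, where $\hat x_i=Q_i$ when $x_i$ is a premise left-hand side and $\hat x_i=\eta(x_i)$ otherwise. \emph{(ii)} Let $\rho$ assign $\rho(x_i)=\hat x_i$ and $\rho(y_i)=Q_i'$; this is well defined because the $x_i$ and the $y_i$ are all distinct and the $y_i$ are fresh. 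Every premise of $r$ holds under $\rho$ since $Q_i\step{a_i}Q_i'$, so, the LTS being a model of $\lang$, the conclusion holds as well: $\sigma(\hat x_1,\dots,\hat x_n)=\sigma(x_1,\dots,x_n)^\rho\step{\alpha}t^\rho$. \emph{(iii)} It remains to prove $t^\rho\tausteps t^\eta$; together with (i) and (ii) this yields $\sigma(x_1,\dots,x_n)^\eta\tausteps\step{\alpha}\tausteps t^\eta$, and $\step{\alpha}$ refines $\step{(\alpha)}$.

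For move (iii), smoothness (condition~\ref{item:coolsmooth}) ensures that no premise left-hand side occurs in $t$, so $\rho$ and $\eta$ agree on every source variable of $t$; thus $t^\rho$ and $t^\eta$ differ only in that at each occurrence of a receiving variable $y_i$ the former has $Q_i'$ where the latter has $\eta(y_i)$, and $Q_i'\tausteps\eta(y_i)$. The key auxiliary fact, proved by induction on the structure of $u$, is: if $u$ is a subterm of $t$, $z$ is receiving in $t$, and assignments $\theta_1,\theta_2$ agree off $z$ with $\theta_1(z)\tausteps\theta_2(z)$, then $u^{\theta_1}\tausteps u^{\theta_2}$. The variable cases are immediate; for $u=\sigma'(v_1,\dots,v_m)$, whenever $z$ occurs in $v_k$ the $k$-th argument of $\sigma'$ is receiving (witnessed by the subterm $u$ of $t$ and the definition of a receiving argument), hence carries a patience rule by condition~\ref{item:coolreceivingpatience}, which lifts the inductive path $v_k^{\theta_1}\tausteps v_k^{\theta_2}$ up through $\sigma'$; performing this for one such $k$ after another, keeping the remaining components fixed, gives $u^{\theta_1}\tausteps u^{\theta_2}$. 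Applying this to $u=t$ for the receiving variables $y_i$ in turn, replacing $Q_i'$ by $\eta(y_i)$ one at a time, yields $t^\rho\tausteps t^\eta$ and finishes the proof.

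The step I expect to be the main obstacle is exactly this auxiliary fact: the bookkeeping when a receiving variable occurs at several, possibly nested, positions, and checking that the ``receiving argument'' clause of Definition~\ref{def:formats} is triggered at each of them; the rest is routine chaining of patience-rule applications together with one use that the LTS models $\lang$. (The statement genuinely exploits all five conditions of ``simply WB cool''; in particular \ref{item:coolreceivingpatience} and \ref{item:coolsmooth} are essential to move (iii), which is why the ``simply BB cool'' fragment, which drops them, calls for a separate, weaker lemma.)
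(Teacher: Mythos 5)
Your proof is correct and follows essentially the same route as the paper's: the same case split on whether the rule is a patience rule, the same construction of an intermediate assignment via patience rules for active arguments followed by one application of the rule in the model, and the same concluding induction on the target term (your auxiliary fact is exactly the paper's Lemma~\ref{lem:auxrobweak}, which uses Clauses~\ref{item:coolsmooth} and~\ref{item:coolreceivingpatience} in the same way). No gaps.
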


\begin{lemma}\label{lem:superrob}
Let $\lang$ be a simply BB cool language, let
    $(\states,\act,\trans)$ be a model for $\lang$,
and let $\frac{\{x_i \step{\beta_i} y_i \mid i\in I\}}{\sigma(x_1,\dots,x_n)\step{\alpha} t}$
be some rule in $\lang$.
If $\eta, \theta: \vars\rightarrow\states$ are assignments s.t.\
for all $i\in I$ it holds that $\smash{\eta(x_i)\tausteps\theta(x_i)\step{(\beta_i)}\theta(y_i)}$ and
for every $x\not\in\{x_i, y_i\mid i\in I\}$ we have $\eta(x)=\theta(x)$,
then $\sigma(x_1,\dots,x_n)^\eta \tausteps \sigma(x_1,\dots,x_n)^\theta \step{(\alpha)} t^\theta$.
\end{lemma}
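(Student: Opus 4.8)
The plan is to establish the two halves of the conclusion in turn: first that $\sigma(x_1,\dots,x_n)^\eta \tausteps \sigma(x_1,\dots,x_n)^\theta$, and then that $\sigma(x_1,\dots,x_n)^\theta \step{(\alpha)} t^\theta$.

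For the first half, note that since the rule is a GSOS rule, the right-hand sides $y_i$ of its premises do not occur among the source variables $x_1,\dots,x_n$; together with the hypothesis that $\eta$ and $\theta$ agree outside $\{x_i,y_i\mid i\in I\}$, this means $\eta(x_j)=\theta(x_j)$ for every argument $j$ whose source variable $x_j$ is not the left-hand side of a premise. For each of the remaining arguments $j$ — those with $x_j = x_i$ for some $i\in I$ — the $j$th argument of $\sigma$ is active, since $x_j$ occurs at the left-hand side of a premise of this very rule, so by condition~\ref{item:coolactive} of Definition~\ref{def:formats} there is a patience rule for the $j$th argument of $\sigma$; as the model satisfies all rules of $\lang$, this patience rule holds in $(\states,\act,\trans)$. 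I would use it to lift each $\tau$-step along the path $\eta(x_i)\tausteps\theta(x_i)$ to a $\tau$-step of $\sigma$ that changes only its $i$th component, and then rewrite the arguments one at a time (first argument $1$, then $2$, and so on up to $n$), at each stage turning $\eta(x_j)$ into $\theta(x_j)$ by such a chain of patience steps while leaving the other components untouched. Transitivity of $\tausteps$ then yields $\sigma(x_1,\dots,x_n)^\eta \tausteps \sigma(x_1,\dots,x_n)^\theta$.

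For the second half, distinguish two cases depending on whether the rule has a premise labelled $\tau$. If it has none, then for each $i\in I$ the hypothesis $\theta(x_i)\step{(\beta_i)}\theta(y_i)$ with $\beta_i\neq\tau$ forces $\theta(x_i)\step{\beta_i}\theta(y_i)$, so applying the rule — which holds in the model — under the assignment $\theta$ gives $\sigma(x_1,\dots,x_n)^\theta\step{\alpha}t^\theta$, and hence also $\sigma(x_1,\dots,x_n)^\theta\step{(\alpha)}t^\theta$. If it does have a $\tau$-premise, then by condition~\ref{item:cooltaupatience} the rule is itself a patience rule $\frac{x_k\step{\tau}y_k}{\sigma(x_1,\dots,x_n)\step{\tau}\sigma(x_1,\dots,y_k,\dots,x_n)}$, so $I=\{k\}$, $\alpha=\tau$ and $t=\sigma(x_1,\dots,y_k,\dots,x_n)$, and the hypothesis reads $\theta(x_k)\step{(\tau)}\theta(y_k)$. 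When $\theta(x_k)\step{\tau}\theta(y_k)$, applying this patience rule under $\theta$ gives $\sigma(x_1,\dots,x_n)^\theta\step{\tau}t^\theta$; when instead $\theta(x_k)=\theta(y_k)$, substituting shows $t^\theta=\sigma(x_1,\dots,x_n)^\theta$, so $\sigma(x_1,\dots,x_n)^\theta\step{(\tau)}t^\theta$ holds reflexively. In all cases the second half follows.

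The proof is largely routine; the point requiring the most care is the bookkeeping in the first half — verifying that, by the GSOS shape of the rule, the components on which $\eta$ and $\theta$ may differ are precisely the active arguments, so that condition~\ref{item:coolactive} supplies a patience rule for each of them, and that these single-component rewrites, carried out against a mixture of $\eta$- and $\theta$-values in the other positions, compose into one $\tausteps$-path of $\sigma$. The $\tau$-premise subcase of the second half also needs a small observation, namely that condition~\ref{item:cooltaupatience} collapses it to the patience-rule shape, which in particular makes the target coincide with the source exactly when $\theta(x_k)=\theta(y_k)$.
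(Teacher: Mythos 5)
Your proof is correct and follows essentially the same route as the paper's: both halves rest on Clause~\ref{item:coolactive} (patience rules for active arguments, lifted component-by-component to get $\sigma(x_1,\dots,x_n)^\eta \tausteps \sigma(x_1,\dots,x_n)^\theta$) and Clause~\ref{item:cooltaupatience} (the $\tau$-premise case collapses to a patience rule, with the reflexive subcase of $\step{(\tau)}$ handled separately). The only difference is organizational — the paper cases on whether the rule is a patience rule up front, whereas you split the conclusion into its two halves first — but the underlying argument is identical.
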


\section{The abstract framework for bisimulations}\label{sec:abstractframework}

We recall the lattice-theoretical framework of up-to techniques
proposed by Pous and Sangiorgi~\cite{pous:coin}, which allows to obtain
enhancements of branching bisimilarity and other coinductively defined relations
in a modular fashion. 
Throughout this section, let $f,b,s \colon \Rel_X \rightarrow \Rel_X$ be monotone maps. 

We think of $\gfp(b)$ as the coinductive object of interest (e.g., bisimilarity);
then, to prove $(P,Q) \in \gfp(b)$ it suffices to prove $(P,Q) \in \cR$ for some 
$\cR \subseteq b(\cR)$ (e.g., a bisimulation). 
The aim of using up-to techniques is to alleviate this proof obligation,
by considering an additional map $f$, and proving instead
that $\cR \subseteq b(f(\cR))$; such a relation is called
a \emph{$b$-simulation up to $f$} (e.g., a bisimulation up to $f$). 
Typically, this map $f$ will increase the argument relation. 
Not every function $f$ is suitable as an up-to technique: it should be sound.

\begin{definition}
We say that $f$ is $b$-sound if $\gfp(b \circ f) \subseteq \gfp(b)$.
\end{definition}
When one proves $\cR\subseteq b(f(\cR))$
it follows that $\cR \subseteq \gfp(b \circ f)$.
Soundness is indeed the missing link to
conclude $\cR \subseteq \gfp(b)$.
Unfortunately the composition of two $b$-sound functions is not $b$-sound in general~\cite[Exercise~6.3.7]{pous:enhancements}.
To obtain compositionality we use the stronger notion of respectfulness.

\begin{definition}
A function $f$ is $b$-respectful if $f\circ (b\meet\id) \below (b\meet\id) \circ f$.
\end{definition}

This originates from Sangiorgi~\cite{sangiorgi:proof},
and was used to prove that up-to context is sound for
strong bisimilarity, for faithful contexts.
Lemma~\ref{lem:compositionality} states that respectful functions are sound, and gives
methods to combine them.
It summarises certain results from~\cite{pous:coin,pous:enhancements}
about \emph{compatible} functions: $f$ is $b$-compatible
if $f \circ b \below b \circ f$. Thus, respectfulness simply means
$b \meet \id$-compatibility. While compatibility is stronger than respectfulness,
this difference disappears if we move to the \emph{greatest} compatible function,
given as the join of all $b$-compatible functions. 

\begin{lemma}\label{lem:compositionality}
Consider the \emph{companion} of $b$,
defined by $\bro =\bigvee \{f \mid f \circ b \below b \circ f\}$.
\begin{enumerate}
\item for all respectful functions $f$ it holds that $f\below \bro$;
\item for all sets $F$ such that $f\below\bro$ for every $f\in F$, we have $\bigvee F \below \bro$;
\item for any two functions $f,g\below\bro$ it holds that $g \circ f \below \bro$;
\item if $\cS \subseteq b(\cS)$ then, for $\utS$ the constant-to-$\cS$ function, $\utS \leq \bro$;
\item if $f \below \bro$ then $f(\gfp(b)) \subseteq \gfp(b)$;
\item if $f \below \bro$ then $f$ is $b$-sound.
\end{enumerate}
\end{lemma}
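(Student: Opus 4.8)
The plan is to isolate four closure properties of the companion $\bro$ and to derive all six items from them: (i) $\bro$ is monotone; (ii) $\id\below\bro$; (iii) $\bro\circ\bro\below\bro$; and (iv) $\bro\circ b\below b\circ\bro$, i.e.\ $\bro$ is itself $b$-compatible. Property~(i) is clear since $\bro$ is a join of monotone maps. For~(ii), $\id$ is $b$-compatible because $\id\circ b=b\circ\id$, so $\id$ is among the maps joined to form $\bro$. For~(iv) I would first note that joins of $b$-compatible maps are again $b$-compatible: for any family $\{g_j\}$ one has $(\bigvee_j g_j)\circ b=\bigvee_j(g_j\circ b)\below\bigvee_j(b\circ g_j)\below b\circ(\bigvee_j g_j)$, the last inequality by monotonicity of $b$; since $\bro$ is by definition such a join, it is $b$-compatible. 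For~(iii), composites of $b$-compatible monotone maps are $b$-compatible: $(g\circ h)\circ b=g\circ(h\circ b)\below g\circ(b\circ h)\below(b\circ g)\circ h=b\circ(g\circ h)$, using compatibility of $h$ and monotonicity of $g$, then compatibility of $g$; instantiating $g=h=\bro$ (licit by~(i)) shows $\bro\circ\bro$ is $b$-compatible, hence $\bro\circ\bro\below\bro$.

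With (i)--(iv) in hand, items~(2)--(6) are routine. (2): $\bro$ is an upper bound of $F$, so it dominates the least upper bound $\bigvee F$. (3): $g\circ f\below g\circ\bro\below\bro\circ\bro\below\bro$, where the first two steps transport $f\below\bro$ and $g\below\bro$ across a composition using monotonicity of~$g$ together with~(i), and the last is~(iii). (4): the constant map $\utS$ is monotone, and $\utS(b(\cR))=\cS\subseteq b(\cS)=b(\utS(\cR))$ by the hypothesis $\cS\subseteq b(\cS)$, so $\utS$ is $b$-compatible, hence $\utS\below\bro$. (5): it suffices to prove $\bro(\gfp(b))\subseteq\gfp(b)$, since then $f(\gfp(b))\subseteq\bro(\gfp(b))\subseteq\gfp(b)$ for any $f\below\bro$; and from $b(\gfp(b))=\gfp(b)$ and~(iv) we get $\bro(\gfp(b))=\bro(b(\gfp(b)))\subseteq b(\bro(\gfp(b)))$, so $\bro(\gfp(b))$ is a $b$-post-fixed point and is therefore contained in the greatest one, $\gfp(b)$. (6): let $\cR=\gfp(b\circ f)$, so $\cR\subseteq b(f(\cR))$; chaining monotonicity, (iv), $f\below\bro$ and (iii) gives $\bro(\cR)\subseteq\bro(b(f(\cR)))\subseteq b(\bro(f(\cR)))\subseteq b(\bro(\bro(\cR)))\subseteq b(\bro(\cR))$, so $\bro(\cR)$ is a $b$-post-fixed point, whence $\bro(\cR)\subseteq\gfp(b)$ and, by~(ii), $\cR\subseteq\bro(\cR)\subseteq\gfp(b)$.

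The substantive item, and the one I expect to be the main obstacle, is~(1). A respectful $f$ is by definition $(b\meet\id)$-compatible, hence below the companion $\bro'$ of $b\meet\id$; so it suffices to prove $\bro'\below\bro$, i.e.\ that $\bro'$ is $b$-compatible. The plan here is: (a) $b$ is $(b\meet\id)$-compatible, since $b((b\meet\id)(\cR))=b(b\cR\cap\cR)\subseteq b(b\cR)\cap b\cR=(b\meet\id)(b\cR)$ by monotonicity, hence $b\below\bro'$; (b) $\id\below\bro'$ and $\bro'\circ\bro'\below\bro'$, by the same reasoning as (ii)--(iii) with $b\meet\id$ in place of $b$; (c) for $\cS:=\bro'(\cR)$ we then have $\bro'(\cS)=\cS$ (by (b)), and $b(\cS)\subseteq\bro'(\cS)=\cS$ by~(a), so $b\cR\subseteq b\cS\subseteq\cS$ (the first inclusion since $\cR\subseteq\bro'(\cR)=\cS$) and hence $(b\meet\id)(\cS)=b\cS\cap\cS=b\cS$; (d) combining, $\bro'(b\cR)\subseteq\bro'(b\cS)=\bro'((b\meet\id)(\cS))\subseteq(b\meet\id)(\bro'(\cS))\subseteq b(\bro'(\cS))=b(\bro'(\cR))$, using $(b\meet\id)$-compatibility of $\bro'$ and $\bro'(\cS)=\cS$. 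This establishes $\bro'\circ b\below b\circ\bro'$, hence $\bro'\below\bro$ and item~(1). The delicate point is the passage (c)--(d): one only has direct control over $\bro'$ applied to relations of the form $(b\meet\id)(-)$, and the trick is that replacing $\cR$ by the relation $\cS=\bro'(\cR)$ — which is closed under both $\bro'$ and $b$ — makes $b$ and $b\meet\id$ agree on it, closing the gap. (Alternatively, one may simply cite the fact from~\cite{pous:enhancements} that the companions of $b$ and $b\meet\id$ coincide.)
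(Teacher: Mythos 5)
Your proof is correct, but it is considerably more self-contained than the paper's, which for the most part cites results from Pous and Sangiorgi. The substantive divergence is in item~(1): the paper simply invokes the fact that the companion of $b\meet\id$ coincides with $\bro$, whereas you actually prove the inclusion that is needed, namely that the companion $\bro'$ of $b\meet\id$ is $b$-compatible and hence below $\bro$. Your argument for this---passing from $\cR$ to $\cS=\bro'(\cR)$, on which $b$ and $b\meet\id$ agree because $\cS$ is closed under both $b$ and $\bro'$, and then using $(b\meet\id)$-compatibility of $\bro'$---is exactly the standard trick from the companion literature, and you have identified correctly that this is the delicate step; everything else (monotonicity, extensivity, idempotency, closure of compatible maps under joins and composition) checks out. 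The remaining items also differ in flavour rather than substance: for~(4) you show $\utS$ is $b$-compatible outright where the paper shows it is respectful and appeals to item~(1); for~(5) the paper uses the identity $\bro(\emptyset)=\gfp(b)$ while you give the more elementary post-fixed-point argument $\bro(\gfp(b))=\bro(b(\gfp(b)))\subseteq b(\bro(\gfp(b)))$; for~(6) the paper invokes soundness of $\bro$ as a black box while you derive it directly by showing $\bro(\gfp(b\circ f))$ is a post-fixed point of $b$. What your route buys is independence from the cited sources; what the paper's route buys is brevity. Both are valid.
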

Lemma~\ref{lem:compositionality} is used to obtain powerful proof techniques for branching bisimilarity
and other coinductive relations.
If $f$ is below the companion $\bro$, it can safely be used as an up-to technique; moreover,
such functions combine well, via composition and union. 
The above lemma gives some basic up-to techniques for free: for instance,
the function $f(\cR) = \cR \cup \gfp(b)$ is below $\bro$ (for any $b$). 
We will focus on up-to-expansion and up-to-context. Especially
the latter requires more effort to establish, but can drastically alleviate the effort in proving 
bisimilarity. 


%


We conclude with two useful lemmas. The first states that
for symmetric techniques it suffices to prove respectfulness for similarity,
and the second is a proof technique for respectfulness (and, in fact, the original characterisation).

\begin{lemma}\label{lem:simulationisenough}
Let $b = s \meet (\rev \circ s\circ \rev)$.
If $f$ is symmetric (i.e. $f = \rev \circ f \circ \rev$) and $s$-respectful, then $f$ is $b$-respectful.
\end{lemma}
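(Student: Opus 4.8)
The statement to prove is Lemma~\ref{lem:simulationisenough}: if $b = s \meet (\rev \circ s \circ \rev)$, and $f$ is symmetric and $s$-respectful, then $f$ is $b$-respectful. The plan is to unfold the definition of $b$-respectfulness, $f \circ (b \meet \id) \below (b \meet \id) \circ f$, and reduce it to the $s$-respectfulness hypothesis $f \circ (s \meet \id) \below (s \meet \id) \circ f$ together with the symmetry of $f$.

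First I would record the key compatibility properties of the involution $\rev$: it is monotone, an involution ($\rev \circ \rev = \id$), and it commutes with the identity ($\rev \circ \id = \id \circ \rev$). Consequently, for any monotone $g$, $\rev \circ g \circ \rev$ is again monotone, and $\rev$ distributes over meets, so in particular $\rev \circ (s \meet \id) \circ \rev = (\rev \circ s \circ \rev) \meet \id$. From this and the fact that $f$ is symmetric, the hypothesis $f \circ (s \meet \id) \below (s \meet \id) \circ f$ can be ``conjugated'' by $\rev$: applying $\rev \circ (-) \circ \rev$ to both sides (which preserves $\below$ since $\rev$ is a monotone involution) yields $f \circ \bigl((\rev \circ s \circ \rev) \meet \id\bigr) \below \bigl((\rev \circ s \circ \rev) \meet \id\bigr) \circ f$. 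So $f$ is also respectful for the ``reversed simulation'' functional $s' := \rev \circ s \circ \rev$.

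Next I would use the fact that respectfulness (equivalently, $(- \meet \id)$-compatibility in the sense discussed before Lemma~\ref{lem:compositionality}) is closed under meets of the underlying functional: if $f \circ (s \meet \id) \below (s \meet \id) \circ f$ and $f \circ (s' \meet \id) \below (s' \meet \id) \circ f$, then since $b \meet \id = (s \meet \id) \meet (s' \meet \id)$, monotonicity of $f$ gives $f \circ (b \meet \id) \below f \circ (s \meet \id)$ and $f \circ (b \meet \id) \below f \circ (s' \meet \id)$, hence $f \circ (b \meet \id) \below \bigl((s \meet \id) \circ f\bigr) \meet \bigl((s' \meet \id) \circ f\bigr) = \bigl((s \meet \id) \meet (s' \meet \id)\bigr) \circ f = (b \meet \id) \circ f$, where the middle equality holds because meet of functions is computed pointwise and precomposition with $f$ distributes over it. This is exactly $b$-respectfulness of $f$.

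The only mildly delicate point — the main obstacle, such as it is — is the manipulation in the middle step: verifying that applying the conjugation $\rev \circ (-) \circ \rev$ to an inequality of monotone maps preserves it, and that $\rev$ genuinely distributes over the relevant meet, i.e.\ $\rev\bigl((s \meet \id)(\cR)\bigr) = (s' \meet \id)(\cR)$ for all $\cR$; this uses $\rev(\cR_1 \cap \cR_2) = \rev(\cR_1) \cap \rev(\cR_2)$ and $\rev(\rev(\cR)) = \cR$. Everything else is routine lattice-theoretic bookkeeping. I would present the argument compactly, treating $\rev$ as a monotone involution on $\Rel_X$ and lifting it to an involution on $[\Rel_X \to \Rel_X]$ by $g \mapsto \rev \circ g \circ \rev$, under which ``$\below$'' and ``$\meet$'' are preserved and ``$s$-respectful'' becomes ``$s'$-respectful''; then conclude by the meet-closure argument above.
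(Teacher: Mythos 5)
Your argument is correct and complete: conjugating the $s$-respectfulness inequality by the monotone involution $\rev$ (using $\rev\circ\rev=\id$, distribution of $\rev$ over intersections, and symmetry of $f$) gives respectfulness for $\rev\circ s\circ\rev$, and the pointwise meet-closure step then yields $b$-respectfulness, using only the standing monotonicity assumption on $f$. The paper itself omits the proof of this lemma (it is the standard argument from Pous--Sangiorgi), and your proposal is exactly that expected argument, so there is nothing to flag.
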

%

\begin{lemma}\label{lem:reformulation}
The function $f$ is $b$-respectful if and only if for all $\cR, \cS$ we have that $\cR \subseteq \cS$ and $\cR \subseteq b(\cS)$
implies $f(\cR) \subseteq b(f(\cS))$.
\end{lemma}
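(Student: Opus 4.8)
The statement is an "if and only if", so I would prove the two implications separately. The key is to unfold the definition of $b$-respectfulness, namely $f \circ (b \meet \id) \below (b \meet \id) \circ f$, and compare it with the claimed relational characterisation. Recall that, by definition of the pointwise order $\below$ on monotone functions, $b$-respectfulness says: for every relation $\cT$, $f\big((b\meet\id)(\cT)\big) \subseteq (b\meet\id)\big(f(\cT)\big)$, i.e.\ $f\big(\cT \cap b(\cT)\big) \subseteq f(\cT) \cap b\big(f(\cT)\big)$.

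For the "only if" direction, suppose $f$ is $b$-respectful and let $\cR, \cS$ be arbitrary with $\cR \subseteq \cS$ and $\cR \subseteq b(\cS)$. I want $f(\cR) \subseteq b(f(\cR))$... wait, more precisely the claim is $f(\cR)\subseteq b(f(\cS))$. The idea is to apply respectfulness at $\cT = \cS$: this yields $f\big(\cS \cap b(\cS)\big) \subseteq f(\cS) \cap b(f(\cS)) \subseteq b(f(\cS))$. Now $\cR \subseteq \cS$ and $\cR \subseteq b(\cS)$ give $\cR \subseteq \cS \cap b(\cS)$, so by monotonicity of $f$ we get $f(\cR) \subseteq f\big(\cS \cap b(\cS)\big) \subseteq b(f(\cS))$, as required.

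For the "if" direction, assume the relational condition holds and fix an arbitrary relation $\cT$; I must show $f\big(\cT \cap b(\cT)\big) \subseteq f(\cT) \cap b\big(f(\cT)\big)$. Set $\cR := \cT \cap b(\cT)$ and $\cS := \cT$. Then $\cR \subseteq \cS$ and $\cR \subseteq b(\cT) = b(\cS)$, so the hypothesis applies and gives $f(\cR) \subseteq b(f(\cS)) = b(f(\cT))$. It remains to observe $f(\cR) \subseteq f(\cT) = f(\cS)$, which is immediate from $\cR \subseteq \cS$ and monotonicity of $f$. Intersecting the two inclusions yields $f(\cT \cap b(\cT)) \subseteq f(\cT)\cap b(f(\cT)) = (b\meet\id)(f(\cT))$, and since $\cT$ was arbitrary this is exactly $f\circ(b\meet\id) \below (b\meet\id)\circ f$.

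I do not expect any real obstacle here: the lemma is essentially a repackaging of the pointwise order and the definition of meet, and the only thing to be careful about is tracking where monotonicity of $f$ is used (twice, once in each direction) and not conflating $b(f(\cS))$ with $b(f(\cR))$ — the relational formulation is deliberately stated with the "slack" between $\cR$ and $\cS$ that makes it usable as a coinductive proof principle. I would keep the write-up to a few lines, explicitly citing monotonicity of $f$ and the unfolding of $\below$.
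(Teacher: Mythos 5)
Your proposal is correct and follows essentially the same route as the paper's proof: in the forward direction you combine $\cR \subseteq \cS \cap b(\cS)$ with monotonicity of $f$ and respectfulness applied at $\cS$, and in the converse direction you instantiate the relational condition at $\cR := \cT \cap b(\cT)$, $\cS := \cT$ and intersect the two resulting inclusions. The only differences are cosmetic (choice of variable names and the order in which monotonicity and respectfulness are invoked).
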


\section{Branching bisimilarity: expansion and context}
\label{sec:bb-up-to}

\paragraph{Up-to expansion.}
The first up-to technique for strong bisimilarity was reported by Milner \cite{milner:comm}.
It is based on the enhancement function $\utsRs$
where $\sbis$ denotes strong bisimilarity.
It is well known that a similar enhancement function $\utwRw$ is unsound for weak bisimilarity \cite{sangiorgi:problem,pous:enhancements},
and the same counterexample shows that the enhancement function $\utbRb$ is unsound
for branching bisimilarity: the relation $\{(\tau.a, 0)\}$ on CCS processes~\cite{milner:comm} 
is a branching bisimulation up to $\utbRb$, using that
$a \brbis \tau.a$, but clearly
$\tau.a$ is not branching bisimilar to $0$.
The function $\utsRs$ is $\brmono$-respectful. But it turns out that one can do slightly better,
using an efficiency preorder called expansion \cite{kumar:efficiency,pous:enhancements}.
We proceed to define such a preorder for branching bisimilarity and show that it results
in a more powerful up-to technique than strong bisimilarity.


\begin{definition}
Consider the function $\emonogt \colon \relp \rightarrow \relp$ defined as
$\emonogt(\cR) = \{ (P, Q) \mid$ for all $P'$ and all $\alpha$, if $P \step{\alpha} P'$ then
     there exists $Q'$ such that $Q \step{(\alpha)} Q'$ and $P' \rcR Q'$; and
     for all $Q'$ and all $\alpha$, if $Q \step{\alpha} Q'$ then
     there exist $P',P''$ such that $P \tausteps P' \step{\alpha} P''$
     with $P' \rcR Q$ and $P'' \rcR Q' \}.$
We say $\cR$ is a \emph{branching expansion} if $\cR \subseteq \emonogt(\cR)$.
Denote $\gfp(\emonogt)$ by $\gexp$. 
\end{definition}

Informally $P \rgexp Q$ means that $P$ and $Q$ are branching bisimilar and
$P$ always performs at least as many $\tau$-steps as $Q$.
Similar notions of expansion can be defined for $\eta$ and delay bisimilarity.
Examples are at the end of this section.


\begin{lemma}\label{lem:expansionrespectful}
The function $\uteRe$ is $\brmono$-respectful.
\end{lemma}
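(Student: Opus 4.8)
To prove that $\uteRe$ is $\brmono$-respectful I would use the reformulation in Lemma~\ref{lem:reformulation}: it suffices to show that whenever $\cR\subseteq\cS$ and $\cR\subseteq\brmono(\cS)$, then $\uteRe(\cR)\subseteq\brmono(\uteRe(\cS))$. Here $\uteRe(\cR)=\mathord{\gexp}\relcmp\cR\relcmp\mathord{\rev(\gexp)}$ (composition of the expansion preorder, the relation, and its converse). Since $\brmono=\brsmono\meet(\rev\circ\brsmono\circ\rev)$ and $\uteRe$ is symmetric, by Lemma~\ref{lem:simulationisenough} it is in fact enough to prove that $\uteRe$ is $\brsmono$-respectful, i.e.\ to check only the "left-to-right" simulation clause. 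So the real work reduces to: assuming $\cR\subseteq\cS$ and $\cR\subseteq\brsmono(\cS)$, and given $P\rgexp P_0$, $P_0\rcR Q_0$, $Q_0\rev(\gexp)Q$ — i.e.\ $Q\rgexp Q_0$ — together with a transition $P\step{\alpha}P'$, produce $Q'$, $Q''$ with $Q\tausteps Q'\step{(\alpha)}Q''$ and $(P,Q')\in\uteRe(\cS)$, $(P',Q'')\in\uteRe(\cS)$.

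**Key steps.** First, push the transition $P\step{\alpha}P'$ through the expansion $P\rgexp P_0$: by definition of $\emonogt$ there is $P_0'$ with $P_0\step{(\alpha)}P_0'$ and $P'\rgexp P_0'$. Second, push this (possibly trivial, if $\alpha=\tau$) step through $P_0\rcR Q_0$ using $\cR\subseteq\brsmono(\cS)$: in the case $\alpha=\tau$, $P_0=P_0'$, we can take $Q_0\tausteps Q_0$ trivially and stay related; in the case of a genuine step $P_0\step{\alpha}P_0'$ we get $Q_0', Q_0''$ with $Q_0\tausteps Q_0'\step{(\alpha)}Q_0''$, $(P_0,Q_0')\in\cS$ and $(P_0',Q_0'')\in\cS$. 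Third — and this is the crucial direction — simulate the sequence $Q_0\tausteps Q_0'\step{(\alpha)}Q_0''$ backwards through the expansion $Q\rgexp Q_0$. Because $\emonogt$ requires $Q$ to match every $Q_0$-move with a $\tausteps\step{\alpha}\step{}$ sequence while retaining relatedness at the intermediate state, I can replay the $\tausteps$ block step by step (each $\tau$-step of $Q_0$ is matched by $Q$ performing some $\tausteps$, with the pre-step state still $\gexp$-below $Q_0$'s current state), landing at some $\tilde Q$ with $\tilde Q\rgexp Q_0'$; then the step $Q_0'\step{(\alpha)}Q_0''$ is matched by $\tilde Q\tausteps Q'\step{(\alpha)}Q''$ with $Q'\rgexp Q_0'$ and $Q''\rgexp Q_0''$ (or trivially when $\alpha=\tau$). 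Concatenating, $Q\tausteps Q'\step{(\alpha)}Q''$. Finally assemble the up-to witnesses: $(P,Q')\in\uteRe(\cS)$ because $P\rgexp P_0\relcmp(P_0,Q_0')\in\cS$ — wait, more carefully, $P\rgexp P_0$, $(P_0,Q_0')\in\cS$ (or $Q_0'=Q_0$ and $(P_0,Q_0)\in\cS$), and $Q'\rgexp Q_0'$, so $(Q_0',Q')\in\rev(\gexp)$, giving $(P,Q')\in\gexp\relcmp\cS\relcmp\rev(\gexp)=\uteRe(\cS)$; similarly $(P',Q'')\in\uteRe(\cS)$ from $P'\rgexp P_0'$, $(P_0',Q_0'')\in\cS$, $Q''\rgexp Q_0'$ — here I must be slightly careful that the final expansion state of $Q$ matches the $\cS$-related state of $P_0$, which it does since both are $Q_0''$ up to the reflexive convention.

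**Main obstacle.** The delicate point is step three: simulating a whole $\tausteps$-block "downwards" through an expansion and maintaining, at the appropriate intermediate states, both the $\gexp$-relatedness needed to continue the induction and the matching with the branching-simulation witnesses $(P_0,Q_0')\in\cS$, $(P_0',Q_0'')\in\cS$. This requires an inner induction on the length of $Q_0\tausteps Q_0'$, using the second conjunct of $\emonogt$ repeatedly, and care with the $P'\step{(\alpha)}$ versus $P'\step{\alpha}$ case split (the $\alpha=\tau$, no-move case must be handled separately throughout, where the "stuttering" clause $P'\rcR Q'$ — here the corresponding $\gexp$/$\cS$ relatedness at the non-moved state — is what saves us). A clean way to organise this is to first prove an auxiliary transfer lemma: if $Q\rgexp Q_0$ and $Q_0\tausteps\step{(\alpha)}R$, then $Q\tausteps\step{(\alpha)}R'$ for some $R'$ with $R'\rgexp R$, and moreover any state $\hat Q_0$ occurring on the $Q_0$-path has a corresponding state on the $Q$-path that is $\gexp$-below it. With that lemma in hand, the assembly of $\uteRe(\cS)$-pairs is routine bookkeeping about composing $\gexp$, $\cS$ and $\rev(\gexp)$.
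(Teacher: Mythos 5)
Your proposal is correct and follows essentially the same route as the paper's proof: reduce via Lemmas~\ref{lem:simulationisenough} and~\ref{lem:reformulation} to showing ${\gexp\rcR\lexp}\subseteq\brsmono(\gexp\rcS\lexp)$, push the challenge through the left expansion, answer it with $\cR\subseteq\brsmono(\cS)$, and then transfer the resulting $Q_0\tausteps Q_0'\step{(\alpha)}Q_0''$ back through $Q\rgexp Q_0$ using the second clause of $\emonogt$ (the paper does this inline rather than as a separate transfer lemma, with the same case splits on the $(\alpha)$-steps). The bookkeeping of the $\gexp\relcmp\cS\relcmp\lexp$ witnesses matches the paper's as well.
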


The proof of Lemma~\ref{lem:expansionrespectful} is routine if we use Lemmas~\ref{lem:simulationisenough}
and \ref{lem:reformulation}.
It suffices to show that if $\cR \subseteq \brsmono(\cS)$ and $\cR\subseteq\cS$ then
${\gexp\rcR\lexp} \subseteq {\brsmono(\gexp\rcS\lexp)}$.
This inclusion can be proved by playing the branching simulation game on the pairs in $\gexp\rcR\lexp$.


\paragraph{Up-to context.}
Next, we consider LTSs generated by GSOS languages.
Here, an up-to-context technique enables us to use congruence properties
of process algebras in the bisimulation game: it suffices
to relate terms by finding a mutual context for both terms.
We show that if $\lang$ is a language
in the simply BB cool format,
then the closure w.r.t.\ $\lang$-contexts is
$\brmono$-respectful. 

\begin{definition}\label{def:ctx}
Let $\lang=(\sig,\rules)$ be a positive GSOS language and let $\cR$ be a relation on closed $\lang$-terms.
The closure of $\cR$ under $\lang$-contexts is denoted by $\ctxL(\cR)$ and is defined as
the smallest relation that is closed under the following inference rules:
$
\frac{P \rcR Q}{P \rctxLR Q}
$
and 
$
\frac{P_1 \rctxLR Q_1 \quad \dots \quad P_{\arity(\sigma)} \rctxLR Q_{\arity(\sigma)}}
{\sigma(P_1,\dots,P_{\arity(\sigma)}) \rctxLR \sigma(Q_1,\dots,Q_{\arity(\sigma)})}
$.
\end{definition}

\begin{theorem}\label{thm:br-up-to-ctx}
Let $\lang$ be a simply BB cool language.
Then $\ctxL$ is $\brmono$-respectful.
\end{theorem}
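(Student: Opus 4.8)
The plan is to reduce the claim to a statement about branching \emph{simulation} and then apply the abstract machinery. By Lemma~\ref{lem:simulationisenough}, since $\ctxL$ is symmetric (the inference rules in Definition~\ref{def:ctx} are symmetric in $P$ and $Q$) and $\brmono = \brsmono \meet (\rev \circ \brsmono \circ \rev)$, it suffices to prove that $\ctxL$ is $\brsmono$-respectful. By Lemma~\ref{lem:reformulation}, this amounts to showing: for all relations $\cR,\cS$ with $\cR \subseteq \cS$ and $\cR \subseteq \brsmono(\cS)$, we have $\ctxL(\cR) \subseteq \brsmono(\ctxL(\cS))$. Note that $\ctxL(\cR) \subseteq \ctxL(\cS)$ follows from monotonicity of $\ctxL$, so the real content is the simulation inclusion $\ctxL(\cR) \subseteq \brsmono(\ctxL(\cS))$.

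First I would set up the induction. Fix $\cR,\cS$ as above and take a pair $(U,V) \in \ctxL(\cR)$; I prove by induction on the derivation of $U \rctxLR V$ that $(U,V) \in \brsmono(\ctxL(\cS))$. The base case, where $U \rcR V$, is immediate: $\cR \subseteq \brsmono(\cS) \subseteq \brsmono(\ctxL(\cS))$ by monotonicity of $\brsmono$ and $\cS \subseteq \ctxL(\cS)$. For the inductive step, $U = \sigma(P_1,\dots,P_n)$ and $V = \sigma(Q_1,\dots,Q_n)$ with $P_j \rctxLR Q_j$ for each $j$, and by the induction hypothesis each $(P_j,Q_j) \in \brsmono(\ctxL(\cS))$. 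Suppose $U \step{\alpha} U'$. Since we work in the canonical model of a positive GSOS language, this transition is witnessed by a rule $\frac{\{x_i \step{\beta_i} y_i \mid i \in I\}}{\sigma(x_1,\dots,x_n) \step{\alpha} t}$ and a substitution $\rho$ with $\rho(x_j) = P_j$, $U' = t^\rho$, and $P_i \step{\beta_i} \rho(y_i)$ for each $i \in I$. Because $\lang$ is simply BB cool, the rule is straight, only patience rules carry $\tau$-premises, and every active argument has a patience rule — I need to use these conditions to control which arguments move.

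The heart of the argument is to play the branching simulation game on the $P_i$'s feeding the active arguments and then invoke Lemma~\ref{lem:superrob}. For each premise $x_i \step{\beta_i} y_i$ (with $\beta_i \neq \tau$, since by condition~\ref{item:cooltaupatience} a $\tau$-premise only appears in a patience rule, which I would handle separately and more directly via the patience rule on that argument), from $(P_i,Q_i) \in \brsmono(\ctxL(\cS))$ and $P_i \step{\beta_i} \rho(y_i)$ I obtain $Q_i', Q_i''$ with $Q_i \tausteps Q_i' \step{(\beta_i)} Q_i''$, $(P_i,Q_i') \in \ctxL(\cS)$, and $(\rho(y_i), Q_i'') \in \ctxL(\cS)$. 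For the non-active arguments $j$, I set $Q_j' = Q_j'' = Q_j$. Now I want to define assignments $\eta,\theta$ for Lemma~\ref{lem:superrob}: $\eta$ sends the source variables $x_k$ to the $Q_k$ and $\theta$ sends $x_k$ to $Q_k'$ (the "after the $\tau$-steps" states), with $\theta(y_i) = Q_i''$, and both agreeing elsewhere as required. The hypotheses of Lemma~\ref{lem:superrob} — $\eta(x_i) \tausteps \theta(x_i) \step{(\beta_i)} \theta(y_i)$ and $\eta(x) = \theta(x)$ off the relevant variables — hold by construction. The lemma then yields $\sigma(x_1,\dots,x_n)^\eta \tausteps \sigma(x_1,\dots,x_n)^\theta \step{(\alpha)} t^\theta$, i.e.\ $V = \sigma(Q_1,\dots,Q_n) \tausteps \sigma(Q_1',\dots,Q_n') \step{(\alpha)} t^\theta$. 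It remains to check the two $\ctxL(\cS)$-memberships demanded by $\brsmono$: $(U, \sigma(Q_1',\dots,Q_n')) \in \ctxL(\cS)$, which follows by applying the context closure rule to the pairs $(P_k, Q_k') \in \ctxL(\cS)$ (using $(P_i,Q_i') \in \ctxL(\cS)$ from the game for active $i$, and $(P_j,Q_j) = (P_j,Q_j') \in \ctxL(\cS)$ for the rest, where the latter needs the induction hypothesis to actually give membership in $\ctxL(\cS)$ — here one uses $\cR \subseteq \cS$ to descend, or more carefully re-examines the IH); and $(U', t^\theta) = (t^\rho, t^\theta) \in \ctxL(\cS)$, which follows because $t$ is built from the variables $x_k$ and $y_i$, on which $\rho$ and $\theta$ produce $\ctxL(\cS)$-related values, and $\ctxL(\cS)$ is a congruence by construction.

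The main obstacle I anticipate is the bookkeeping around which arguments are allowed to perform internal steps: a non-active argument cannot be moved, yet Lemma~\ref{lem:superrob} wants \emph{all} source variables simultaneously $\tau$-reachable from $\eta$ to $\theta$ — so I must ensure that for non-active arguments $\theta$ agrees with $\eta$, which is exactly why condition~\ref{item:coolactive} (patience rules for active arguments) is not needed here but condition~\ref{item:cooltaupatience} is, to guarantee that non-patience rules never have $\tau$-premises and hence the simulation game on a non-active argument is never triggered by the rule. A related subtlety is handling the case where the rule itself is a patience rule ($\alpha = \tau$, single premise $x_i \step{\tau} y_i$): then I should not go through Lemma~\ref{lem:superrob} but instead note directly that $(P_i, Q_i) \in \brsmono(\ctxL(\cS))$ and a $\tau$-step can be matched by zero or more $\tau$-steps, using the patience rule on argument $i$ of $\sigma$ in the model $\states$ (which exists because the canonical model of a simply BB cool language has all the patience rules that active arguments require — though here one must check argument $i$ is active, which it is, since it has a premise). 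Beyond these case distinctions, the argument is a routine diagram chase, and I expect the informal "play the simulation game, lift via Lemma~\ref{lem:superrob}, close under context" structure to go through without surprises.
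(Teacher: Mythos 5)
Your proposal is correct and follows essentially the same route as the paper's proof: reduce to showing $\cR \subseteq \cS$ and $\cR \subseteq \brmono(\cS)$ imply $\ctxL(\cR) \subseteq \brmono(\ctxL(\cS))$ via Lemma~\ref{lem:reformulation}, then induct on the derivation of $\ctxL(\cR)$, play the branching-simulation game on the premises to build the assignments $\eta,\theta$, invoke Lemma~\ref{lem:superrob}, and close with monotonicity of $\ctxL$ (for the inactive arguments) and closure of $\ctxL(\cS)$ under substitution (for the pair $(t^\rho,t^\theta)$). The only cosmetic differences are that you pass explicitly through similarity via Lemma~\ref{lem:simulationisenough} where the paper treats the converse transfer condition as symmetric, and that you split off the patience-rule case by hand where the paper lets Lemma~\ref{lem:superrob} absorb it.
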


For the proof, we use Lemma~\ref{lem:reformulation} and show that if $\rcR \subseteq \brmono(\rcS)$ 
and $\rcR \subseteq \rcS$ then $\ctxL(\cR) \subseteq \brmono(\ctxL(\cR))$. The proof is by induction on
elements of $\ctxL(\cR)$, using Lemma~\ref{lem:superrob}, which essentially states that a suitable saturation
of the canonical model of $\lang$ (Section~\ref{sec:prelims}) is still a model of $\lang$. 
This is generalised in Section~\ref{sec:coalg}.

The following two examples use a variant of CCS \cite{milner:comm}
  with replication (!); we refer to \cite{pous:enhancements} for its syntax and operational semantics.
\begin{example}\label{ex:uptoeCRe}
We show that ${!\tau.(a\pc\compl{a})} \rbrbis {!(\tau.a+\tau.\compl{a})}$.
Consider the relation $\cR$ containing just the single pair of processes.
It suffices to prove that $\cR$ is a branching bisimulation up to $\uteCRe$
since both $\ctxL$ and $\uteRe$ are $\brmono$-respectful.
In the proof one can use properties for strong bisimilarity like $!P\pc P \rsbis P$
and $P\pc Q \rsbis Q \pc P$.
Since ${\sbis} \subseteq {\gexp}$ these laws also apply to expansion.
Then the expansion law $P\pc \tau.Q \rgexp P\pc Q$ ensures that $\cR$ suffices.
\end{example}

\begin{example}\label{ex:uptosCRs}
We show that $!(a + b) \rbrbis {!\tau.a}\pc{!\tau.b}$.
The relation $\cR$ containing just the single
pair of processes is a branching bisimulation up to $\utsCRs$.
This is sufficient: 
since $\uteRe$ is $\brmono$-respectful and $\utsRs \below \uteRe$,
the function $\utsRs$ is below the companion of $\brmono$, 
and therefore it can be combined with $\ctxL$ to obtain a $\brmono$-sound technique.
\end{example}

A similar result as Theorem~\ref{thm:br-up-to-ctx} is established for weak bisimilarity in~\cite{BonchiPPR17}.
In fact, one can use the lemmas at the end of Section~\ref{sec:prelims} to treat $\eta$ and delay bisimilarity as well. 
We develop a uniform approach in the following section. 

\section{Respectfulness of up-to context: coalgebraic approach}\label{sec:coalg}

We develop conditions for respectfulness of contextual closure
that instantiate to variants for branching, weak,
$\eta$ and delay bisimilarity. In each case, the relevant condition 
is implied by the associated simply cool GSOS format. 

The main step is that contextual closure is respectful
for \emph{similarity}, for a relaxed notion of models of positive GSOS specifications. 
The case of weak, branching, $\eta$ and delay 
are then obtained by considering simulations between LTSs and appropriate saturations thereof.\footnote{Note that this is fundamentally different from reducing weak bisimilarity to strong
bisimilarity on a saturated transition system; there, a challenging transition is weak as well. Here,
instead, strong transitions are answered by weak transitions.}
We use the theory of coalgebras; in particular, the respectfulness
result for simulations is phrased at an abstract level. 
We assume familiarity with basic notions in category theory. 
Further, due to space constraints, we only
report basic definitions; see, e.g.,~\cite{J2016,Rutten00} for details.

The abstract results in this section are inspired by, and close to,
the development in~\cite{BonchiPPR17}. Technically, however, we simplify in two ways:
(1) focusing on simulations rather than on (weak) bisimulations directly 
through functor lifting in a fibration;
and (2) avoiding the technical
sophistication that arises from the combination of fibrations and orderings, by using 
a (simpler) span-based approach in the proofs. Still, we use a number of results from~\cite{BonchiPPR17},
connecting monotone GSOS specifications to distributive laws. 
The cases of branching, $\eta$ and delay bisimilarity, which we treat here, were left as future work in~\cite{BonchiPPR17}.
Note that we do not propose a general coalgebraic theory of weak bisimulations, as introduced, e.g., in~\cite{Brengos13},
but focus on LTSs, which are the models of interest here.


\emph{Coalgebra.}
We denote by $\Set$ the category of sets and functions. 
Given a functor $B \colon \Set \rightarrow \Set$, a \emph{$B$-coalgebra} is a pair $(X,f)$
where $X$ is a set and $f \colon X \rightarrow B(X)$ a function. A \emph{coalgebra homomorphism}
from a $B$-coalgebra $(X,f)$ to a $B$-coalgebra $(Y,g)$ is a map $h \colon X \rightarrow Y$ 
such that $g \circ h = Bh \circ f$. 

Let $\act$ be a fixed of labels with $\tau \in \act$.
Labelled transition systems
are (equivalent to) coalgebras for the functor $B$ given by $B(X) = (\pow X)^{\act}$.
Indeed, a $B$-coalgebra
consists of a set of states $X$ and a map $f \colon X \rightarrow (\pow X)^{\act}$ mapping
a state $x \in X$ to its outgoing transitions; we write
$x \xrightarrow{\alpha}_f y$ or simply $x \xrightarrow{\alpha} y$ for $y \in f(x)(\alpha)$.  
In this section we mean coalgebras for this functor, when referring to LTSs. 
The notations $\Longrightarrow_f$ and
$\smash{\xrightarrow{(\alpha)}_f}$, defined in Section~\ref{sec:prelims},
are used as well.

To define (strong) bisimilarity of coalgebras we make use of \emph{relation lifting}~\cite{J2016},
which maps a relation $R \subseteq X \times Y$ to a relation $\Rel(B)(R) \subseteq BX \times BY$. 
This is given by
$
\Rel(B)(R) = \{(u,v) \mid \exists z \in B(R). B(\pi_1)(z) = u \text{ and } B(\pi_2)(z) = v \}
$.
Now, given $B$-coalgebras $(X,f)$ and $(Y,g)$, a relation $R \subseteq X \times Y$ is a bisimulation
if for all $(x,y) \in R$, we have $f(x) \mathrel{\Rel(B)(R)} g(y)$. 
In case of labelled transition systems, this 
amounts to the standard notion of strong bisimilarity. 

\emph{Algebra.}
An algebra for a functor $H \colon \Set \rightarrow \Set$ is
a pair $(X,a)$ where $X$ is a set and $a \colon H(X) \rightarrow X$ a function. An algebra morphism
from $(X,a)$ to $(Y,b)$ is a map $h \colon X \rightarrow Y$ such that $h \circ a = b \circ Hh$. 
While coalgebras are used here to represent variants of labelled transition systems, 
we will also make use of algebras, to speak about operations in process calculi. 
In order to do so, we first show how to represent a signature $\Sigma$ as a functor $H_\Sigma \colon \Set \rightarrow \Set$, such that 
$H_\Sigma$ algebras are interpretations of the signature $\Sigma$. 
Given $\Sigma$, this functor $H_\Sigma$ is 
defined by:
$
H_\Sigma(X) = \coprod_{\sigma \in \Sigma} \{\sigma\} \times X^{\arity(\sigma)}
$.
On maps $f \colon X \rightarrow Y$, $H_\Sigma$ is defined pointwise, i.e., $H_\Sigma(f)(\sigma(x_1, \ldots, x_{\arity(\sigma)})) = 
\sigma(f(x_1), \ldots, f(x_{\arity(\sigma)}))$.

We denote by $T_\Sigma \colon \Set \rightarrow \Set$ the \emph{free monad} of $H_\Sigma$. 
Explicitly, $T_\Sigma(X)$ is the set of terms over $\Sigma$ with variables in $X$, as generated by the grammar
$t~{::=}~x \mid \sigma(t_1, \ldots, t_{\arity(\sigma)})$
where $x$ ranges over $X$ and $\sigma$ ranges over $\Sigma$.
In particular, $T_\Sigma(\emptyset)$ is the set of closed terms. 
The set $T_\Sigma(X)$ is the carrier of a \emph{free algebra} $\kappa_X \colon H_\Sigma T_\Sigma(X) \rightarrow T_\Sigma(X)$:
there is an arrow $\eta_X \colon X \rightarrow T_\Sigma(X)$ (the unit of the monad $T_\Sigma$) 
such that, for every algebra $b \colon H_\Sigma(Y) \rightarrow Y$ and map $f \colon X \rightarrow Y$,
there is a unique algebra homomorphism $f^\sharp \colon T_\Sigma(X) \rightarrow Y$ s.t.\ $f^\sharp \circ \eta_X = f$. 
In particular, we write $b^{*} \colon T_\Sigma(Y) \rightarrow Y$ for $\id_Y^\sharp$. 
Intuitively, $b^{*}$ inductively extends the algebra structure $b$ on $Y$ to terms over $Y$. 

\emph{Simulation of coalgebras.}
We recall how to represent simulations~\cite{JacobsH03}, based on ordered functors. 
This enables speaking about weak simulations (Section~\ref{sec:coalg-weak}).
As before, by Lemma~\ref{lem:simulationisenough}, relevant respectfulness results extend to bisimulations. 

An \emph{ordered functor} is a functor $B \colon \Set \rightarrow \Set$ together with, for every set $X$,
a preorder ${\sqsubseteq_{BX}} \subseteq BX \times BX$ such that, for every map $f \colon X \rightarrow Y$,
$Bf \colon BX \rightarrow BY$ is monotone. Equivalently, it is a functor
$B$ that factors through the forgetful functor $U \colon \PreOrd \rightarrow \Set$ from the category 
of preorders and monotone maps. 
For maps $f,g \colon X \rightarrow BY$, we write $f \sqsubseteq_{BY} g$ for pointwise inequality, i.e.,
$f(x) \sqsubseteq_{BY} g(x)$ for all $x \in X$. 
Throughout this section we assume $B$ is ordered. 


To define simulations, we recall from~\cite{JacobsH03} the \emph{lax relation lifting} $\Rel_\sqsubseteq(B)$, 
defined on a relation $R \subseteq X \times Y$ as
$
	\Rel_\sqsubseteq(B)(R) = {{\sqsubseteq_{BX}}  \relcmp \Rel(B)(R) \relcmp {\sqsubseteq_{BY}}}
        $.
\begin{definition}\label{def:coalg-sim}
Let $(X,f)$ and $(Y,g)$ be $B$-coalgebras. Define the following monotone operator $s \colon \Rel_{X,Y} \rightarrow \Rel_{X,Y}$
by
$
	s(R) = (f  \times g)^{-1} (\Rel_\sqsubseteq(B)(R)) 
$.
A relation $R \subseteq X \times Y$ is called a \emph{simulation} if 
it is a post-fixed point of $s$.  
\end{definition}
\begin{example}\label{ex:ordered-lts}
	The functor $B(X) = (\pow X)^{\act}$ is ordered, with $u \sqsubseteq_{BX} v$ iff $u(a) \subseteq v(a)$ for all $a \in \act$. 
	The associated lax relation lifting maps $R \subseteq X \times Y$ 
	to $\Rel_\sqsubseteq(B)(R) = \{(u,v) \mid \forall a \in \act. \, \forall x \in u(a). \, \exists y \in v(a). \, (x,y) \in R\}$. 
	A relation $R \subseteq X \times Y$ between (the underlying state spaces of) LTSs is a simulation in the sense
	of Definition~\ref{def:coalg-sim} iff it is a simulation 
	in the standard sense: for all $(x,y) \in R$: if $x \xrightarrow{\alpha} x'$ then $\exists y'. \, y \xrightarrow{\alpha} y'$ and 
	$(x',y') \in R$. 
\end{example}

\subsection{Abstract GSOS specifications and their models}

  \begin{wrapfigure}[5]{r}{0pt}
\begin{minipage}{20em} \vspace{-3.5em}
$$
\xymatrix@R=0.5cm{
H_\Sigma(X) \ar[d]_{a} \ar[r]^-{H_\Sigma \langle f, \id \rangle}
	& H_\Sigma (BX \times X) \ar[r]^-{\lambda_X}
	& B T_\Sigma(X) \ar[d]^{B a^{*}} \\
X \ar[rr]^-{f} 
	& & BX  
}
$$
\end{minipage}
\end{wrapfigure}
An \emph{abstract GSOS specification}~\cite{TP97} is a natural transformation of the form 
$\lambda \colon H_\Sigma(B \times \Id) \Rightarrow BT_\Sigma$. 
Let $X$ be a set, let $a \colon H_\Sigma(X) \rightarrow X$ be an
  algebra, and let $f \colon X \rightarrow BX$ be a coalgebra; the
  triple $(X,a,f)$ is a \emph{$\lambda$-model} if the
  diagram on the right commutes.

In our approach to proving the validity of up-to techniques for weak similarity,
it is crucial to relax the notion of $\lambda$-model to a \emph{lax model}, following~\cite{BonchiPPR17}.
A triple $(X,a,f)$ as above is a \emph{lax $\lambda$-model} if we 
have that $f \circ a \sqsubseteq_{BX} Ba^{*} \circ \lambda_X \circ H_\Sigma \langle f, \id \rangle$, 
and an \emph{oplax $\lambda$-model} if, conversely, 
$f \circ a \sqsupseteq_{BX} Ba^{*} \circ \lambda_X \circ H_\Sigma \langle f, \id \rangle$. 
Since $\sqsubseteq_{BX}$ is a preorder, $(X,a,f)$ is a $\lambda$ model iff it is both a lax and an oplax model. 

Taking the algebra $\kappa_\emptyset \colon H_\Sigma T_\Sigma \emptyset \rightarrow T_\Sigma \emptyset$
on closed terms, there is a unique coalgebra structure $f \colon T_\Sigma \emptyset \rightarrow BT_\Sigma \emptyset$
turning $(T_\Sigma \emptyset, \kappa_\emptyset, f)$ into a $\lambda$-model. We sometimes refer to this coalgebra
structure as the \emph{operational model} of $\lambda$. 

We say $\lambda$ is \emph{monotone} if for
each component $\lambda_X$, we have
$$
\frac{
	u_1 \sqsubseteq_{BX} v_1 \qquad \ldots \qquad u_n \sqsubseteq_{BX} v_n
}{
	\lambda_X(\sigma((u_1, x_1), \ldots, (u_n, x_n))) \sqsubseteq_{BT_\Sigma X} 
	\lambda_X(\sigma((v_1, x_1), \ldots, (v_n, x_n)))
}
$$
for every operator $\sigma \in \Sigma$, elements $u_1, \ldots u_n, v_1, \ldots, v_n \in BX$ and
$x_1, \ldots, x_n \in X$, with $n = \arity(\sigma)$. Informally, if premises have `more behaviour' (e.g.,
more transitions) then we can derive more behaviour from the GSOS specification.

\begin{example}
If $BX = (\pow X)^{\act}$, then a monotone $\lambda$ corresponds to a positive GSOS specification (Definition~\ref{def:gsos}). 
In that case, an algebra $a \colon H_\Sigma(X) \rightarrow X$ together with
a $B$-coalgebra (i.e., LTS) is a $\lambda$-model if, 
for every $P \in X$, we have that 
$P \xrightarrow{\alpha} P'$ iff there is a rule $\frac{H}{\sigma(x_1,\dots,x_n)\step{\alpha}t}$
    and a map $\rho \colon V \rightarrow X$ (with $V$ the set of
    variables occurring in the rule) such that
    $P = a(\sigma(\rho(x_1),\dots,\rho(x_n)))$,
    $P' = \rho^\sharp(t)$ (recall that $\rho^{\sharp}$
      denotes the unique algebra homomorphism associated with $\rho$)
    and for all premises $x_i \step{\beta_i} y_i \in H$ we have that
    $\rho(x_i)\step{\beta_i}\rho(y_i)$.
This coincides with the interpretation in Section~\ref{sec:prelims}.
A lax model only asserts the implication from right to left (transitions are closed under application of 
rules) and an oplax model asserts the converse (every transition arises from a rule).  
\end{example}

%
%

\subsection{Respectfulness of contextual closure}

We prove a general respectfulness result of contextual closure w.r.t.\ simulation. 
First we generalise contextual closure as follows~\cite{BonchiPPR17}. 
Given algebras $a \colon H_\Sigma(X) \rightarrow X$ and $b \colon H_\Sigma(Y) \rightarrow Y$,
the \emph{contextual closure} $\ctx_{a,b} \colon \Rel_{X,Y} \rightarrow \Rel_{X,Y}$ is defined by 
$
  \ctx_{a,b}(R) = a^{*} \times b^{*} (\Rel(T_\Sigma)(R)) = \{(a^{*}(u), b^{*}(v)) \mid (u,v) \in \Rel(T_\Sigma)(R)\}
$.
For $X = Y = T_\Sigma(\emptyset)$ and 
$a = b = \kappa_\emptyset \colon H_\Sigma T_\Sigma(\emptyset) \rightarrow T_\Sigma(\emptyset)$, 
$\ctx_{a,b}$ coincides with the contextual closure $\ctx$ of Definition~\ref{def:ctx}.
This allows us to formulate the main result of this section, giving 
sufficient conditions for respectfulness of the contextual closure 
with respect to $s$ from Definition~\ref{def:coalg-sim}.
In fact, this result is slightly more general than needed: 
we will always instantiate $(X,a,f)$ below with a $\lambda$-model.

\begin{theorem}\label{thm:simulation-respectful}
	Suppose that $(X,a,f)$ is an oplax model of a monotone abstract GSOS specification $\lambda$,
	and $(Y,b,g)$ is a lax model. Then $\ctx_{a,b}$ is 
	$s$-respectful. 
 \end{theorem}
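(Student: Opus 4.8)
The plan is to use Lemma~\ref{lem:reformulation}: it suffices to show that for all relations $R \subseteq S$ with $R \subseteq s(S)$, we have $\ctx_{a,b}(R) \subseteq s(\ctx_{a,b}(S))$. So fix such $R, S$ and take a pair $(a^{*}(u), b^{*}(v)) \in \ctx_{a,b}(R)$, witnessed by $(u,v) \in \Rel(T_\Sigma)(R)$, i.e., by some $w \in T_\Sigma(R)$ with $T_\Sigma(\pi_1)(w) = u$ and $T_\Sigma(\pi_2)(w) = v$. We must produce a witness showing $(f \times g)(a^{*}(u), b^{*}(v)) \in \Rel_\sqsubseteq(B)(\ctx_{a,b}(S))$, that is, we must factor $f(a^{*}(u))$ through $\sqsubseteq_{BX}$, then relation lifting of $B$ applied to $\ctx_{a,b}(S)$, then $\sqsubseteq_{BY}$ into $g(b^{*}(v))$.

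The key idea is to move the problem up to the free model on $R$. Let $j \colon R \to S$ be the inclusion, and let $\langle f\pi_1, g\pi_2 \rangle \colon R \to BX \times BY$; the assumption $R \subseteq s(S)$ says precisely that this factors through $\Rel_\sqsubseteq(B)(S)$, i.e., there are monotone "slack" terms and a witness $z \colon R \to B(S)$ with $f \circ \pi_1 \sqsubseteq_{BX} B\pi_1^S \circ z$ (up to $\sqsubseteq$) and similarly on the right. Using the GSOS specification $\lambda$, I would build a coalgebra structure $h \colon T_\Sigma(R) \to B\,T_\Sigma(S)$ on terms over $R$ by the standard inductive recipe, taking variables in $R$ to their images under (a lifting of) $z$, and combining subterm behaviours with $\lambda$. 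The span $T_\Sigma(\pi_1^{R\subseteq S} \text{ into } S)$-style naturality then gives, by induction on $w \in T_\Sigma(R)$ and using monotonicity of $\lambda$ at each step: first, $f(a^{*}(u)) \sqsubseteq_{BX} Ba^{*}(\text{left leg of } h(w))$ — here the \emph{oplax} inequality for $(X,a,f)$ is used to push $f$ through the algebra at the top of the term, and monotonicity of $\lambda$ together with the per-variable slack from $z$ propagates the inequality upward; and dually, using the \emph{lax} inequality for $(Y,b,g)$, $Bb^{*}(\text{right leg of } h(w)) \sqsubseteq_{BY} g(b^{*}(v))$. The middle term $B(\langle a^{*}\circ T_\Sigma\pi_1, b^{*}\circ T_\Sigma\pi_2\rangle)(h(w)) \in B(\ctx_{a,b}(S))$ by construction, since $h(w) \in B\,T_\Sigma(S)$ pairs each $S$-related pair of $\Sigma$-terms, and applying $a^{*}\times b^{*}$ lands it in $\ctx_{a,b}(S)$. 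Composing the two $\sqsubseteq$'s around this witness gives exactly membership in $\Rel_\sqsubseteq(B)(\ctx_{a,b}(S))$, as required.

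The main obstacle is the inductive construction of $h$ and the bookkeeping in the induction step: one must verify that $\lambda$'s monotonicity is strong enough to absorb the $\sqsubseteq_{BX}$ slack at every argument of every operator simultaneously, and that the lax/oplax inequalities compose in the right direction at the algebra applications (oplax on the source side, lax on the target side). Naturality of $\lambda$ is what lets us relate the behaviour computed over $R$ with that computed over $S$ via $T_\Sigma$ of the inclusion; getting the span witnesses (the $z$ above, and its propagation through $T_\Sigma$) to cohere with the $\Rel(B)$ and $\Rel(T_\Sigma)$ liftings is the delicate point. Once the inductive claim is phrased correctly — roughly, "for every $\Sigma$-term $w$ over $R$, the behaviour of $a^{*}(T_\Sigma\pi_1\, w)$ under $f$ is $\sqsubseteq$ a behaviour obtained by applying $\lambda$ recursively to the per-variable behaviours given by $z$, which in turn is $\sqsubseteq$ (mutatis mutandis) the behaviour of $b^{*}(T_\Sigma\pi_2\, w)$ under $g$" — each case (variable, operator) is a routine diagram chase using naturality of $\lambda$, monotonicity of $\lambda$, and one application of the (op)lax model inequality.
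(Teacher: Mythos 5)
Your proposal follows essentially the same route as the paper: reduce to Lemma~\ref{lem:reformulation}, extract a span-based witness $r\colon R\to B(S)$ from $R\subseteq s(S)$, lift it through $T_\Sigma$ using naturality and monotonicity of $\lambda$, apply the oplax inequality on the source algebra and the lax one on the target algebra, and project down to the contextual closure. The only difference is packaging: where you describe an explicit induction on terms constructing $h\colon T_\Sigma(R)\to BT_\Sigma(S)$, the paper invokes the induced distributive law $\overline{\lambda}$ over the copointed functor $B\times\Id$ (pairing $r$ with the inclusion $R\hookrightarrow S$ to supply the state component your recipe implicitly needs) and cites lax-bialgebra lemmas from the literature to discharge exactly the bookkeeping you flag as the delicate point.
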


\subsection{Application to weak similarity}
\label{sec:coalg-weak}

Let $(X, f)$ be an LTS. Define a new LTS $(X,\overline{f})$ by
$x \xrightarrow{\alpha}_f x'
$ iff $
x \Longrightarrow_f \, \xrightarrow{(\alpha)}_{\overline{f}} \, \Longrightarrow_f x'
$.
We call $(X,\overline{f})$ the \emph{wb-saturation} of $(X,f)$. 
Let $s_{wb} \colon \Rel_X \rightarrow \Rel_X$ be the functional for simulation (Definition~\ref{def:coalg-sim})
between $(X,f)$ and $(X,\overline{f})$. Then $R \subseteq s_{wb}(R)$ precisely if $R$ is a weak
simulation on $(X,f)$. 

\begin{proposition}
	Let $(X,a,f)$ be a model of a positive GSOS specification, 
	and suppose $(X, a, \overline{f})$ is a lax model. Then 
	$\ctx_{a,a}$ is $s_{wb}$-respectful. 
\end{proposition}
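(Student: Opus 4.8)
The plan is to obtain this as a direct corollary of Theorem~\ref{thm:simulation-respectful}. The key observation is that the two hypotheses of that theorem are met by taking $(X,a,f)$ itself on the oplax side and $(X,a,\overline{f})$ on the lax side.

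First I would recall, from the Example following the definition of monotone $\lambda$, that a positive GSOS specification is the same data as a monotone abstract GSOS specification $\lambda$ for $BX = (\pow X)^\act$, and that an LTS algebra is ``a model of a positive GSOS specification'' precisely when it is a $\lambda$-model. Since each $\sqsubseteq_{BX}$ is a preorder, a $\lambda$-model is simultaneously a lax and an oplax $\lambda$-model; in particular $(X,a,f)$ is an oplax $\lambda$-model. The second hypothesis of Theorem~\ref{thm:simulation-respectful}, namely that $(X,a,\overline{f})$ is a lax $\lambda$-model, is assumed outright in the statement.

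Next I would instantiate Theorem~\ref{thm:simulation-respectful} with the oplax model $(X,a,f)$ and the lax model $(Y,b,g) := (X,a,\overline{f})$. This yields that $\ctx_{a,b} = \ctx_{a,a}$ is $s$-respectful, where $s$ is the simulation functional of Definition~\ref{def:coalg-sim} associated with the pair of coalgebras $(X,f)$ and $(X,\overline{f})$. By the definition of $s_{wb}$ given just above the proposition, this $s$ is exactly $s_{wb}$, so $\ctx_{a,a}$ is $s_{wb}$-respectful, as desired. It is worth recording that, by the same definition, the post-fixed points of $s_{wb}$ are precisely the weak simulations on $(X,f)$, so this is indeed the up-to-context statement for weak similarity that we were after.

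I do not expect a serious obstacle beyond bookkeeping: the mathematical content sits in Theorem~\ref{thm:simulation-respectful}, and in the side condition that $(X,a,\overline{f})$ is a lax model --- a hypothesis here, but one that in the intended application to a simply WB cool language would be discharged using Lemma~\ref{lem:superrobweak} (the wb-saturation of the operational model still satisfies the rules, read as lax implications). The one point requiring care is orientation: the ``challenger'' coalgebra must be the oplax model $f$ and the ``defender'' coalgebra the lax model $\overline{f}$, which matches the intuition in the footnote --- a strong transition of $f$ is answered by a weak transition, i.e.\ a transition of $\overline{f}$ --- and is exactly the direction in which Theorem~\ref{thm:simulation-respectful} is phrased.
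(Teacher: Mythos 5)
Your proposal is correct and matches the paper's intended argument exactly: the paper treats this proposition as an immediate corollary of Theorem~\ref{thm:simulation-respectful}, instantiating the oplax side with the $\lambda$-model $(X,a,f)$ (a model being both lax and oplax since $\sqsubseteq_{BX}$ is a preorder) and the lax side with $(X,a,\overline{f})$, so that the resulting simulation functional is $s_{wb}$ by definition. No further comment is needed.
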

The condition of being a lax model is exactly as in Lemma~\ref{lem:superrobweak}.
Hence, the contextual closure of any simply WB-cool GSOS language 
is $s_{wb}$-respectful. To obtain an analogous result for delay similarity, we simply adapt
the saturation to \emph{db-saturation}, and the appropriate functional 
$s_{db}$.

\paragraph{Branching similarity.}

To capture branching simulations of LTSs in the coalgebraic framework,
we will work again with saturation. It is not immediately clear how to do so:
we encode branching simulations by slightly changing the functor, in order
to make relevant intermediate states observable. 

Let $B'(X) = (\pow (X \times X))^{\act}$. A $B'$-coalgebra is similar to an LTS, but transitions take the form
$x \xrightarrow{\alpha} (x', x'')$, i.e., to a pair of next states. We will use this to encode 
branching similarity, as follows. Given an LTS $(X, f)$, define
the \emph{bb-saturation} as the coalgebra $(X, \overline{f})$ where 
$x \xrightarrow{\alpha}_{\overline{f}} (x', x'')$
iff
$x \Rightarrow x' \xrightarrow{(\alpha)} x'' $.
Further, note that every LTS $(X, f)$ gives a $B'$ coalgebra 
$(X, f')$ by setting $x \xrightarrow{a}_{f'} (x', x'')$ iff $x'=x$
  and $ x \xrightarrow{a}_f x''$. 

For an LTS $(X,f)$, consider the functional $s_{bb} \colon \Rel_X \rightarrow \Rel_X$
for $B'$-simulation between $(X,f')$ and $(X, \overline{f})$ (Definition~\ref{def:coalg-sim}). 
Then a relation $R \subseteq X \times X$ is a branching simulation precisely if
$R \subseteq s_{bb}(R)$. 

To obtain the desired respectfulness result from Theorem~\ref{thm:simulation-respectful}, 
the last step is to obtain a GSOS specification for $B'$ from a given positive GSOS
specification (for $B$). 
This is possible if all operators are straight.
In that case, every rule is of the form 
$\frac{\{x_i \xrightarrow{\beta_i} x_i'\}_{i \in I}}
{\sigma(x_1,\dots,x_{\arity(\sigma)}) \step{\alpha} t}
$
for some $I \subseteq \{1, \ldots, \arity(\sigma)\}$. 
This is translated to
$$\frac{\{x_i \xrightarrow{\beta_i} (x_i'',x_i')\}_{i \in I}}
{\sigma(x_1,\dots,x_{\arity(\sigma)}) \step{\alpha} (t^\rho,t)}
\qquad 
\text{ 
where }
\rho(x) = 
\begin{cases}
	x_i'' & \text{ if $x = x_i$ for some $i \in I$} \\
	x & \text{ otherwise}
\end{cases}
$$
If the original specification is presented as an abstract GSOS specification $\lambda$, 
then we denote the corresponding abstract GSOS specification (for $B'$) according
to the above translation by $\lambda'$. 
(It is currently less clear how to represent this translation directly at the abstract level; we leave this
for future work.)


%

\begin{proposition}\label{prop:coalg-bb}
	Let $(X,a,f)$ be a model of a positive GSOS specification $\lambda$ 
	with only straight rules. 
	Then $(X,a,f')$ is a model of $\lambda'$, defined as above; and
	if $(X, a, \overline{f})$ is a lax model, with $(X,\overline{f})$ the bb-saturation, then 
	$\ctx_{a,a}$ is $s_{bb}$-respectful. 
\end{proposition}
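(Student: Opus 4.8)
The plan is to reduce the statement to Theorem~\ref{thm:simulation-respectful}, applied to the translated abstract GSOS specification $\lambda'$ (for the functor $B'$) and to the two $B'$-coalgebras that appear in the definition of $s_{bb}$, namely $(X,f')$ and the bb-saturation $(X,\overline{f})$. The first task is to equip $B'$ with an order making it an ordered functor, compatibly with Definition~\ref{def:coalg-sim}: the natural choice is $u \sqsubseteq_{B'X} v$ iff $u(\alpha) \subseteq v(\alpha)$ for all $\alpha$, exactly as for $B$ in Example~\ref{ex:ordered-lts}. One checks that $B'f$ is monotone for this order, and that the lax relation lifting $\Rel_\sqsubseteq(B')$ computes, for $R \subseteq X\times X$, the relation $\{(u,v)\mid \forall\alpha.\,\forall(x',x'')\in u(\alpha).\,\exists(y',y'')\in v(\alpha).\,(x',y')\in R \wedge (x'',y'')\in R\}$; this is a routine unfolding and is what makes $R\subseteq s_{bb}(R)$ coincide with ``$R$ is a branching simulation,'' as already asserted in the text preceding the proposition. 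Here one uses that a transition $x\xrightarrow{\alpha}(x',x'')$ of $f'$ has $x'=x$, so the first-component condition $(x,y')\in R$ together with $x\Rightarrow y'$ in $\overline f$ recovers the clause $P\rcR Q'$ of $\brsmono$, while $(x'',y'')\in R$ recovers $P'\rcR Q''$.

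Second, I would verify the two claims about models. The claim that $(X,a,f')$ is a $\lambda'$-model is a direct translation: by hypothesis $(X,a,f)$ is a $\lambda$-model, so $P\xrightarrow{\alpha}_f P'$ holds iff it is derivable from a rule of $\lambda$; unwinding the definition of $\lambda'$ and of $f'$ (which records the source as the first component), each such derivation corresponds bijectively to a $\lambda'$-derivation of $P\xrightarrow{\alpha}_{f'}(P,P')$, using that the substitution $\rho$ in the translation sends $x_i$ to the recorded source component $x_i''$, which for $f'$ is just $x_i$ itself, so $t^\rho = t$. Hence the iff defining a model holds for $(X,a,f')$ and $\lambda'$; in particular it is both a lax and an oplax model. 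The claim we actually need is that $(X,a,f')$ is an \emph{oplax} $\lambda'$-model (every transition arises from a rule) — this follows. The hypothesis that $(X,a,\overline f)$ is a lax $\lambda$-model must likewise be transported to: $(X,a,\overline f)$, viewed as a $B'$-coalgebra via the bb-saturation, is a lax $\lambda'$-model. This is the step requiring the most care, because the bb-saturation is not simply ``$\overline f$ with sources recorded''; a transition $x\xrightarrow{\alpha}_{\overline f}(x',x'')$ records the intermediate state $x'$ reached by $\tau$-steps, not the original source. The content of ``$(X,a,\overline f)$ is a lax $\lambda$-model'' is exactly the conclusion of Lemma~\ref{lem:superrob} (as the text observes): given a rule and assignments $\eta,\theta$ witnessing that each premise holds in the saturated sense $\eta(x_i)\Rightarrow\theta(x_i)\xrightarrow{(\beta_i)}\theta(y_i)$, one gets $\sigma(\vec x)^\eta \Rightarrow \sigma(\vec x)^\theta \xrightarrow{(\alpha)} t^\theta$. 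Reading $\theta$ as the ``intermediate'' assignment and $\eta$ as the source assignment, this is precisely a $\lambda'$-transition $\sigma(\vec x)^\eta \xrightarrow{\alpha}_{\overline f}(\sigma(\vec x)^\theta, t^\theta)$ with first component $\sigma(\vec x)^\theta = (\sigma(\vec x))^{\theta}$ matching the translated target's first component $t^\rho$ under the identification $\rho = \theta\!\restriction$; so the lax-model inequality for $\lambda'$ at $(X,a,\overline f)$ holds. Monotonicity of $\lambda$ transfers to monotonicity of $\lambda'$ by inspection of the translation, since the premises and the order are copied componentwise.

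Third, with $\lambda'$ monotone, $(X,a,f')$ an oplax $\lambda'$-model and $(X,a,\overline f)$ a lax $\lambda'$-model, Theorem~\ref{thm:simulation-respectful} yields that $\ctx_{a,a}$ is $s$-respectful, where $s$ is the simulation functional between $(X,f')$ and $(X,\overline f)$ — and by construction $s = s_{bb}$. Finally one notes that the $\ctx_{a,a}$ produced by Theorem~\ref{thm:simulation-respectful} is the contextual closure for the $H_\Sigma$-algebra $a$, which does not depend on whether we view the coalgebra as a $B$- or $B'$-coalgebra, so it is literally the same operator; in the instance $X = T_\Sigma(\emptyset)$, $a = \kappa_\emptyset$ it is the $\ctxL$ of Definition~\ref{def:ctx}. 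This completes the proof.

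I expect the main obstacle to be the bookkeeping in the middle paragraph: correctly matching the two assignments $\eta,\theta$ of Lemma~\ref{lem:superrob} (or equivalently the lax-model condition for $\overline f$) to the ``source'' and ``intermediate'' components of the $B'$-coalgebra $\overline f$, and checking that the substitution $\rho$ in the definition of $\lambda'$ acts on the target exactly as $\theta$ does, so that the first component $t^\rho$ of the $\lambda'$-target coincides with $(\sigma(x_1,\dots,x_n))^\theta$ appearing in Lemma~\ref{lem:superrob}. Once that identification is pinned down, monotonicity of $\lambda'$ and the oplax-model property of $(X,a,f')$ are immediate from the translation, and the result falls out of Theorem~\ref{thm:simulation-respectful}.
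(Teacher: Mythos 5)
Your overall strategy is the intended one: the proposition is meant to follow directly from Theorem~\ref{thm:simulation-respectful}, instantiated with the (pointwise) ordered functor $B'$, with $(X,a,f')$ playing the role of the oplax model, the bb-saturation $(X,a,\overline{f})$ that of the lax model (note that ``lax model'' in the statement already refers to $\lambda'$, since $\overline{f}$ is a $B'$-coalgebra, so no transport from $\lambda$ to $\lambda'$ is needed inside this proof --- your middle paragraph is really the content of the subsequent instantiation via Lemma~\ref{lem:superrob} that recovers Theorem~\ref{thm:br-up-to-ctx}), and the observation that $\ctx_{a,a}$ depends only on the algebra $a$ and is therefore the same operator for $B$ and $B'$. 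Your unfolding of $\Rel_\sqsubseteq(B')$ and the check that $\lambda'$ inherits monotonicity are also as intended.

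There is, however, a genuine bookkeeping gap exactly where you predicted the main obstacle would be, and it does not go through as you wrote it. You argue that under an $f'$-instantiation the substitution $\rho$ collapses ($x_i''$ is interpreted as the source $P_i$), ``so $t^\rho = t$'', and conclude that the translated rule derives $P \xrightarrow{\alpha}_{f'} (P,P')$. But if the first component of the translated target is literally $t^\rho$, that instantiation yields the pair $(P',P')$, not $(P,P')$ (already for a patience rule, $t^\rho$ evaluates to $\sigma(P_1,\dots,P_i',\dots,P_n)$ rather than the source), so $(X,a,f')$ would \emph{not} be a model of $\lambda'$; likewise your identification of $t^\rho$ with $\sigma(x_1,\dots,x_n)^\theta$ in the lax-model step fails. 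Both model claims only close if the first component of the translated target is $\sigma(x_1,\dots,x_n)^\rho$, i.e.\ $\sigma(v_1,\dots,v_n)$ with $v_i = x_i''$ for $i \in I$ and $v_i = x_i$ otherwise: under $f'$ this evaluates to the source $\sigma(P_1,\dots,P_n)$, and under $\overline{f}$ to the intermediate state $\sigma(x_1,\dots,x_n)^\theta$ appearing in Lemma~\ref{lem:superrob}. You need to pin the translation down in this form (this is the intended reading of the displayed translation) before asserting that $(X,a,f')$ is a $\lambda'$-model and that the lax-model condition for $\overline{f}$ matches Lemma~\ref{lem:superrob}; with that correction the rest of your reduction to Theorem~\ref{thm:simulation-respectful} is sound.
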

We recover Theorem~\ref{thm:br-up-to-ctx}
from Proposition~\ref{prop:coalg-bb} and Lemma~\ref{lem:superrob} (and Lemma~\ref{lem:simulationisenough} to move
 from similarity to bisimilarity). 
Again, to obtain respectfulness for $\eta$-similarity, one simply adapts
the notion of saturation.

\section{Conclusion and Future Work}\label{sec:conc-fw}

We have seen two main up-to techniques, that can be combined: expansion and, most notably, 
contextual closure. In particular, 
we have shown that for any language defined by a simply cool format, the contextual closure is respectful
for the associated equivalence; this applies to weak, branching, $\eta$ and delay bisimilarity. The latter
follows from a general coalgebraic argument on simulation. 

There are several avenues left for future work. First, we have treated up-to-expansion on a case-by-case basis;
it would be useful to have a uniform treatment of this technique that instantiates to various weak equivalences.
Second, it would be interesting to investigate up-to context for \emph{rooted} and \emph{divergence-sensitive} versions of the weak behavioural equivalences. 
Associated `cool' rule formats have already been proposed~\cite{BloomCT88:GSOS,glabbeek:cool}. 
Third, the current treatment of up-to context heavily relies on positive formats; whether our results
can be extended to rule formats with negative premises is left open. Perhaps the \emph{modal decomposition}
approach to congruence results~\cite{FvG17:dcii,FGL19} can help---investigating the 
relation of this approach to up-to techniques is an exciting direction of research.  Finally, extension of the formats to languages including a recursion construct would be very interesting, especially since the proofs that weak and branching bisimilarity are compatible with this construct use up-to techniques \cite{Mil89,Gla93}.

\paragraph{Acknowledgements.} We thank Filippo Bonchi for the idea
how to encode branching bisimilarity coalgebraically, and the reviewers for their useful comments.

\bibliographystyle{plain}
\bibliography{../bibliography}

\newpage

\begin{appendix}

\section{Proofs for Section~\ref{sec:prelims}}

\begin{lemma}\label{lem:straightpremises}
Let $\lang=(\sig,\rules)$ be a positive GSOS language and suppose that
all rules in $\rules$ are straight.
Then for all rules $\frac{H}{f(x_1,\dots,x_n) \step{\alpha} t}$ of $\lang$ the
set of premises $H$ is of the form $\{x_i\step{\beta_i}y_i \mid i\in I\}$
for some index set $I \subseteq \{1,\dots,n\}$ with the variables
  in $x_i$ ($i\in I$) and $y_i$ ($i\in I$) all distinct.
\end{lemma}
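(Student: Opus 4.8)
The plan is to unpack the definitions of "positive GSOS" and "straight" and observe that the claimed normal form is essentially immediate. Let $\frac{H}{f(x_1,\dots,x_n) \step{\alpha} t}$ be a rule of $\lang$. By Definition~\ref{def:gsos}, every premise in $H$ has the form $x_i \step{\beta_i} y_i$ where the left-hand side $x_i$ occurs among the source variables $x_1,\dots,x_n$; so we may index the premises by the set $I = \{\, i \in \{1,\dots,n\} \mid x_i \text{ is the left-hand side of some premise in } H \,\}$, writing $H = \{ x_i \step{\beta_i} y_i \mid i \in I' \}$ for some multiset $I'$ of indices drawn from $I$. The straightness hypothesis says exactly that the left-hand sides of all premises are distinct, so no index is repeated: $I' = I$, and $H = \{ x_i \step{\beta_i} y_i \mid i \in I \}$ with $I \subseteq \{1,\dots,n\}$ as required.

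It remains to check the distinctness claim for the variables $x_i$ ($i \in I$) and $y_i$ ($i \in I$). The $x_i$ ($i \in I$) are distinct because $x_1,\dots,x_n$ are distinct variables by Definition~\ref{def:gsos}. The $y_i$ ($i \in I$) are distinct because Definition~\ref{def:gsos} requires the right-hand sides of all premises to be distinct. Finally, no $x_i$ equals any $y_j$: Definition~\ref{def:gsos} stipulates that the right-hand sides $y_j$ of all premises do not occur among $x_1,\dots,x_n$, and in particular $y_j \neq x_i$ for all $i \in \{1,\dots,n\}$, hence for all $i \in I$. Collecting these three facts gives that all the variables $x_i$ ($i\in I$) and $y_i$ ($i\in I$) are pairwise distinct.

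There is no real obstacle here: the lemma is a bookkeeping consequence of the GSOS constraints plus straightness, and the only mild care needed is to keep the three distinctness conditions of Definition~\ref{def:gsos} (distinct sources, distinct premise right-hand sides, premise right-hand sides disjoint from sources) straight and to note that straightness is precisely what collapses the premise index multiset to a genuine subset of $\{1,\dots,n\}$.
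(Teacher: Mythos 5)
Your proof is correct and follows essentially the same route as the paper's: both arguments observe that straightness forces at most one premise per source variable, and then invoke the distinctness clauses of Definition~\ref{def:gsos} (distinct sources, distinct premise right-hand sides, right-hand sides disjoint from sources) to conclude. Your version merely spells out the bookkeeping that the paper leaves implicit.
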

\begin{proof}
It follows from straightness that each variable $x_i$ can appear at most once
as a left-hand side of a premise.
Since the right-hand sides of the premises should also be distinct,
we can uniquely relate them to their left-hand side.
\end{proof}

  \begin{lemma}\label{lem:auxrobweak}
  Let $\lang$ be a positive GSOS language that satisfies Clause~\ref{item:coolreceivingpatience} of Definition~\ref{def:formats}, let
  $(\states,\act,\trans)$ be a model for $\lang$, and let
  $\eta,\theta:\vars\rightarrow\states$ be assignments.
  If $t$ is the target of a rule, $\theta(y)\tausteps\eta(y)$ for
  every receiving variable $y$ in $t$, and $\theta(x)=\eta(x)$ for
  every variable $x$ in $t$ that is not receiving, then
  $t^{\theta}\tausteps t^{\eta}$.
  \end{lemma}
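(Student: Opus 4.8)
The plan is to proceed by structural induction on the target term $t$. The statement quantifies over targets $t$ of rules in $\lang$, but the inductive argument is cleanest if we first observe that the property we want to prove propagates through subterms: if $t$ is a target, then so (morally) are its subterms, in the sense that the notion of ``receiving variable'' is inherited correctly. Concretely, I would prove the slightly more flexible statement: for every term $s$ that occurs as a subterm of some target $t$ in $\lang$, if $\theta(y)\tausteps\eta(y)$ for every variable $y$ that is receiving in $s$ (as a position inside $t$) and $\theta(x)=\eta(x)$ for every non-receiving variable of $s$, then $s^\theta \tausteps s^\eta$. Strictly, receiving-ness is a property of variable occurrences relative to the whole rule, so I would phrase the induction on the position of $s$ inside $t$ rather than on $s$ in isolation; this avoids any ambiguity.

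The base case is $t = x$ a single variable. Either $x$ is receiving in $t$, in which case the hypothesis gives $\theta(x)\tausteps\eta(x)$, i.e. $t^\theta\tausteps t^\eta$ directly; or $x$ is not receiving, in which case $\theta(x)=\eta(x)$ and $t^\theta = t^\eta$, so $t^\theta\tausteps t^\eta$ by reflexivity of $\tausteps$. The inductive case is $t = \sigma(v_1,\dots,v_n)$. By induction hypothesis applied to each $v_i$ (with the receiving/non-receiving split restricted to the variables of $v_i$, which is consistent since receiving-ness is determined positionally), we get $v_i^\theta \tausteps v_i^\eta$ for each $i$. Now I want to lift these $n$ parallel $\tausteps$-sequences to a single $\tausteps$-sequence $\sigma(v_1,\dots,v_n)^\theta \tausteps \sigma(v_1,\dots,v_n)^\eta$. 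Here is where Clause~\ref{item:coolreceivingpatience} enters: I argue that for each argument $i$ in which $v_i^\theta$ and $v_i^\eta$ differ, the $i$th argument of $\sigma$ must be \emph{receiving}. Indeed, if $v_i^\theta\neq v_i^\eta$ then some variable in $v_i$ is receiving, and by definition $\sigma$'s $i$th argument is a receiving argument of an operator in $\lang$; hence it has a patience rule $\frac{x_i\step{\tau}y_i}{\sigma(\dots,x_i,\dots)\step{\tau}\sigma(\dots,y_i,\dots)}$. Since $(\states,\act,\trans)$ is a model of $\lang$, this patience rule lets us replay, one $\tau$-step at a time, the sequence $v_i^\theta\tausteps v_i^\eta$ inside the $i$th slot of $\sigma$, keeping all other arguments fixed. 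Doing this argument by argument (say $i=1$, then $i=2$, etc., accumulating the already-rewritten slots) yields $\sigma(v_1^\theta,\dots,v_n^\theta) \tausteps \sigma(v_1^\eta,\dots,v_n^\eta)$, which is exactly $t^\theta\tausteps t^\eta$.

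The main obstacle I anticipate is purely bookkeeping rather than mathematical: making precise the claim that receiving-ness of a variable occurrence is inherited by subterms in a way that keeps the induction hypothesis applicable, and that an argument slot in which $v_i^\theta\neq v_i^\eta$ is genuinely a \emph{receiving argument} of $\sigma$ in the technical sense of Section~\ref{sec:prelims} (so that Clause~\ref{item:coolreceivingpatience} applies and supplies the patience rule). Once that is pinned down, the ``replay a $\tau$-sequence inside one argument via the patience rule'' step is a routine induction on the length of that sequence, using only that the model satisfies the patience rules. I would also note explicitly that no straightness or smoothness hypothesis is needed here — only Clause~\ref{item:coolreceivingpatience} — which is consistent with the lemma's statement and with its intended use in proving the WB and HB variants.
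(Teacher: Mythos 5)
Your proposal is correct and is exactly the argument the paper has in mind: its proof of this lemma is literally ``by a straightforward induction on the structure of $t$'' (deferring to van Glabbeek), and you have filled in that same structural induction, correctly identifying that the only substantive point is that an argument slot $v_i$ with $v_i^\theta\neq v_i^\eta$ must contain a receiving variable and hence, by Clause~\ref{item:coolreceivingpatience}, the $i$th argument of $\sigma$ has a patience rule that lets you replay $v_i^\theta\tausteps v_i^\eta$ in place. Your worry about positionality is unnecessary---``receiving'' is a property of a variable relative to the rule (being the right-hand side of a premise), not of an occurrence---but this does not affect correctness.
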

\begin{proof}
  By a straightforward induction on the structure of $t$ (see \cite{glabbeek:cool}).
\end{proof}

\begin{proof}[Proof of Lemma~\ref{lem:superrobweak}]
  By Lemma~~\ref{lem:straightpremises}, there exists
  $I\subseteq\{1,\dots,n\}$ and variables $x_i$, $y_i$
  ($i\in I$), al distinct, such that
    $H=\{x_i\step{\beta_i}y_i\mid i\in I\}$.
    It follows that if for each premise $x_i\step{\beta}y_i$ in $H$ it holds that
    $\eta(x_i)\tausteps\step{(\beta)}\tausteps\eta(y_i)$, then there
    exists an assignment
      $\theta: \vars\rightarrow\states$
    such that for all $i\in I$ we have
      $\eta(x_i)\tausteps\theta(x_i)\step{(\beta_i)}\theta(y_i)$,
    and for every variable $x\not\in \{x_i, y_i\mid i\in I\}$ we have
    $\eta(x)=\theta(x)$.

    If $\frac{H}{\sigma(x_1,\dots,x_n)\step{\alpha} t}$ is a patience
    rule, then $I=\{i\}$ for some $1\leq i \leq n$, $\beta_i=\tau$ and
    $t=\sigma(x_1,\dots,y_i,\dots,x_n)$.
    With repeated application of the rule we then get that
    $\sigma(x_1,\dots,x_n)^{\eta}
    \tausteps \sigma(x_1,\dots,x_n)^{\theta}
    \step{(\tau)}\sigma(x_1,\dots,y_i,\dots,x_n)^{\theta}
    \tausteps\sigma(x_1,\dots,y_i,\dots,x_n)^{\eta}=t^{\eta}$.

    Otherwise, $\beta_i\neq\tau$ for all $1\leq i \leq n$ by
    Clause~\ref{item:cooltaupatience} of
    Definition~\ref{def:formats}. Since, by
    Clause~\ref{item:coolactive} of Definition~\ref{def:formats}, for
    every active argument there is a patience rule, we have that
    $\sigma(x_1,\dots,x_n)^{\eta}\tausteps\sigma(x_1,\dots,x_n)^{\theta}$,
    and $\sigma(x_1,\dots,x_n)^{\theta}\step{\alpha} t^{\eta}$ by an
    application of the rule. Clause~\ref{item:coolsmooth} then yields
    that $\theta(x)=\eta(x)$ for all variables $x$ in $t$ that are not
    receiving in $t$, so $t^{\theta}\tausteps t^{\eta}$ by Lemma~\ref{lem:auxrobweak}.
\end{proof}

\begin{proof}[Proof of Lemma~\ref{lem:superrob}]
    If $\frac{H}{\sigma(x_1,\dots,x_n)\step{\alpha} t}$ is a patience
    rule, then $I=\{i\}$ for some $1\leq i \leq n$, $\beta_i=\tau$ and
    $t=\sigma(x_1,\dots,y_i,\dots,x_n)$.
    With repeated application of the rule we then get that
    $\sigma(x_1,\dots,x_n)^{\eta}
    \tausteps \sigma(x_1,\dots,x_n)^{\theta}
    \step{(\tau)}\sigma(x_1,\dots,y_i,\dots,x_n)^{\theta}=t^{\theta}$.

    Otherwise, $\beta_i\neq\tau$ for all $1\leq i \leq n$ by
    Clause~\ref{item:cooltaupatience} of
    Definition~\ref{def:formats}. Since, by
    Clause~\ref{item:coolactive} of Definition~\ref{def:formats}, for
    every active argument there is a patience rule, we have that
    $\sigma(x_1,\dots,x_n)^{\eta}\tausteps\sigma(x_1,\dots,x_n)^{\theta}$,
    and $\sigma(x_1,\dots,x_n)^{\theta}\step{\alpha} t^{\theta}$ by an
    application of the rule.
\end{proof}
  
\section{Proofs for Section~\ref{sec:abstractframework}}\label{sec:proof-coalg}

\begin{proof}[Proof of Lemma~\ref{lem:compositionality}]
All listed properties come from \cite{pous:coin,pous:enhancements}.
\begin{enumerate}
\item[1,2.] The companion of $b\meet\id$ is equal to $\bro$,
that is, $\bro = \bigvee\{f\mid f\circ (b\meet\id) \below (b\meet\id) \circ f\}$.
The first two items follow from this fact immediately.

\item[3.] We use that $\bro$ is idempotent.
The third item then follows easily: $g \circ f \below \bro \circ \bro \below \bro$.

\item[4.] Let $\cX$ be a relation. Then:
\[(\utS \circ (\brmono \meet \id))(\cX) = \cS \subseteq \brmono(\cS) = ((\brmono \meet \id) \circ \utS)(\cX)\,.\]
So $\utS$ is $\brmono$-respectful, hence $\utS \below \bro$ by the first item.

\item[5.] We use that $\bro(\emptyset) = \gfp(b)$. Then:
\[f(\gfp(b)) \subseteq f(\bro(\emptyset)) \subseteq \bro(\bro(\emptyset)) \subseteq \bro(\emptyset) = \gfp(b)\,.\]

\item[6.] We have to prove $\gfp(b\circ f)\subseteq \gfp(b)$.
Since $\bro$ is sound we have $\gfp(b \circ f) \below \gfp(b \circ \bro) \below \gfp(b)$.
\end{enumerate}
\end{proof}

\begin{proof}[Proof of Lemma~\ref{lem:reformulation}]
Adapted from a similar proof in \cite{pous:enhancements}.

\noindent $\Rightarrow)$
Suppose that $f$ is $b$-respectful and
consider two arbitrary relations $\cR,\cS$ such that $\cR \subseteq \cS$ and $\cR \subseteq b(\cS)$.
From set theory we can derive that $\cR \subseteq b(\cS) \cap \cS$.
Monotonicity of $f$ gives that $f(\cR) \subseteq f(b(\cS) \cap \cS)$.
By respectfulness we get $f(b(\cS) \cap \cS) \subseteq b(f(\cS)) \cap f(\cS)$.
Set theory then yields the conclusion $f(\cR) \subseteq b(f(\cS))$.

\noindent $\Leftarrow)$
Suppose that $\forall \cR,\cS: \cR \subseteq \cS \text{ and } \cR \subseteq b(\cS) \text{ implies }
f(\cR) \subseteq b(f(\cS))$.

\noindent Let $\cX$ be some relation.
First we prove that $f(b(\cX) \cap \cX) \subseteq b(f(\cX))$.
Define $\cR := b(\cX) \cap \cX$ and $\cS := \cX$.
To fit these relations in the assumption one can call upon set theory to confirm
$\cR \subseteq \cS$ and $\cR \subseteq b(\cS)$.
Then the assumption yields $f(b(\cX) \cap \cX) \subseteq b(f(\cX))$.

Proving that $f(b(\cX) \cap \cX) \subseteq f(\cX)$
is nothing more than the observation of $b(\cX) \cap \cX \subseteq \cX$
and an application of monotonicity of $f$ to this observation.
Hence $f(b(\cX) \cap \cX) \subseteq b(f(\cX)) \cap f(\cX)$,
so $f$ is $b$-respectful.
\end{proof}

\section{Proofs for Section~\ref{sec:bb-up-to}}\label{app:bb-up-to}

\begin{proof}[Proof of \ref{lem:expansionrespectful}]
    Calling upon the formulation of respectfulness from Lemma~\ref{lem:reformulation},
	consider two relations $\cR, \cS$ and suppose that $\cR \subseteq \cS$ and $\cR \subseteq \brmono(\cS)$.
	We prove that ${\eRe} \subseteq \brmono(\eSe)$.
	To this end consider two processes $P,Q$ such that $P \reRe Q$.
    We should prove that $(P,Q)\in\brmono(\reSe)Q$.
	By relation composition there exist $P_0, Q_0$ with $P \rgexp P_0 \rcR Q_0 \rlexp Q$.
    Now suppose that $P \step{\alpha} P'$.
	Since $\gexp$ is a branching expansion one of two cases can occur.
	\begin{itemize}
	\item If $\alpha=\tau$ and $P' \rgexp P_0$, then
		we have $P' \rgexp P_0 \rcR Q_0 \rlexp Q$.
		Since we have $\cR \subseteq \cS$, we obtain $P' \rgexp P_0 \rcS Q_0 \rlexp Q$.
		Then the conclusion for this case follows from the observation that
		$Q \tausteps Q \step{(\alpha)} Q$ and $P' \reSe Q$.
	\item In the other case there exists $P_0'$ with $P' \rgexp P_0'$ and $P_0 \step{\alpha} P_0'$.
		Since $\cR \subseteq \brmono(\cS)$ and $P_0 \rcR Q_0$ there exist
		$Q_0',Q_0''$ with $Q_0 \tausteps Q_0' \step{(\alpha)} Q_0''$ and
		$P_0 \rcS Q_0'$ and $P_0' \rcS Q_0''$.
		Then $Q_0 \tausteps Q_0'$ and $Q_0 \rlexp Q$
		yield some $\hat{Q}$ with $Q \tausteps \hat{Q}$ with $Q_0 \rlexp \hat{Q}$.
		The transition $Q_0' \step{(\alpha)} Q_0''$ yields another case distinction.
		\begin{itemize}
			\item If $\alpha=\tau$ and $Q_0' = Q_0''$ then we pick $Q'=Q''=\hat{Q}$.
				Since $\lexp$ is reflexive it follows that
				$Q \tausteps Q' \step{(\alpha)} Q''$ with
				$P \rgexp P_0 \rcS Q_0' \rlexp Q'$ and $P' \rgexp P_0' \rcS Q_0' \rlexp Q''$.

			\item In the other case we call upon the related pair $Q_0' \rlexp \hat{Q}$
				to give us $Q'$ and $Q''$ such that $\hat{Q} \tausteps Q' \step{\alpha} Q''$
                with $Q_0' \rlexp Q'$ and $Q_0'' \rlexp Q''$.
				Then $Q \tausteps Q'$ follows from transitivity of $\tausteps$.
				Hence in this case we can also conclude that
				$Q \tausteps Q' \step{(\alpha)} Q''$ with
				$P \rgexp P_0 \rcS Q_0' \rlexp Q'$ and $P' \rgexp P_0' \rcS Q_0' \rlexp Q''$.
		\end{itemize}
	\end{itemize}
	In each of the cases we showed the existence of $Q',Q''$ with $Q \tausteps Q' \step{(\alpha)} Q''$
	and $P \reSe Q'$ and $P' \reSe Q''$.
    The case where $Q \step{\alpha} Q'$ is analogous,
	hence ${\eRe} \subseteq \brmono(\eSe)$.
    \qed
\end{proof}

Below, we will make use of the following basic result. 

\begin{proposition}\label{prop:substitutionpreservescontext}
If $\rho(x) \rctxR \theta(x)$ for all variables $x\in\varsof(t)$ then
${t^\rho \rctxR t^\theta}$ for all terms $t$.
\end{proposition}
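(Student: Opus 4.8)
The plan is to proceed by a straightforward structural induction on the term $t$, using directly the two inference rules that define $\ctxL(\cR)$ in Definition~\ref{def:ctx}. For the statement to be meaningful one takes $\rho,\theta$ to be substitutions that are total and closed on the variables occurring in $t$, so that $t^\rho$ and $t^\theta$ are closed $\lang$-terms and the relation $\rctxR$ applies to them; this is implicit in the hypothesis, since $\rho(x)\rctxR\theta(x)$ only makes sense for closed $\rho(x),\theta(x)$.

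For the base case, suppose $t=x$ for a variable $x$. Then $\varsof(t)=\{x\}$, so the hypothesis directly gives $\rho(x)\rctxR\theta(x)$; since $t^\rho=\rho(x)$ and $t^\theta=\theta(x)$, this is exactly $t^\rho\rctxR t^\theta$. For the inductive step, suppose $t=\sigma(t_1,\dots,t_n)$ with $n=\arity(\sigma)$ and assume the statement for each $t_i$. Since $\varsof(t)=\bigcup_{i=1}^n\varsof(t_i)$, the hypothesis $\rho(x)\rctxR\theta(x)$ holds in particular for every $x\in\varsof(t_i)$, so the induction hypothesis yields $t_i^\rho\rctxR t_i^\theta$ for each $i$. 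Applying the second inference rule of Definition~\ref{def:ctx} to these $n$ premises gives $\sigma(t_1^\rho,\dots,t_n^\rho)\rctxR\sigma(t_1^\theta,\dots,t_n^\theta)$, which is precisely $t^\rho\rctxR t^\theta$, because interpretation of a term in an LTS algebra commutes with the operator symbols: $(\sigma(t_1,\dots,t_n))^\rho=\sigma(t_1^\rho,\dots,t_n^\rho)$.

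There is no genuine obstacle here: the argument is a routine structural induction. The only point that deserves a moment's care is the variable bookkeeping in the inductive step — namely observing that the set of variables of a compound term is exactly the union of the variable sets of its immediate subterms, so that the hypothesis on $t$ is inherited by each $t_i$ and the induction hypothesis is applicable.
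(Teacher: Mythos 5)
Your proof is correct and is exactly the argument the paper intends: the paper's own proof is just ``by induction on $t$'', and your structural induction (variable base case via the first inference rule, operator case via the second) fills in precisely those details.
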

\begin{proof}
By induction on $t$.
\end{proof}

\begin{proof}[Proof of Theorem~\ref{thm:br-up-to-ctx}]
As in Lemma~\ref{lem:expansionrespectful}, we have to show that
$\cR \subseteq \cS$ and $\cR \subseteq \brmono(\cS)$ implies
$\ctxLR \subseteq \brmono(\ctxL(\cS))$. We proceed by induction on $\ctxLR$.

\noindent\textbf{Base case:}
Notice that $\cS \subseteq \ctxL(\cS)$.
By monotonicity of $\brmono$ we obtain $\brmono(\cS) \subseteq \brmono(\ctxL(\cS))$.
Then the required base case $\cR \subseteq \brmono(\ctxL(\cS))$
follows from the assumption $\cR \subseteq \brmono(\cS)$ and
transitivitiy of $\subseteq$.

\noindent\textbf{Induction step:}
Consider the closed terms $P=f(P_1,\dots,P_n)$ and $Q=f(Q_1,\dots,Q_n)$
with $(P_i,Q_i)\in\ctxLR$ for all $1\leq i\leq n$.
Assume the induction hypothesis $(P_i,Q_i) \in \ctxLR$ implies $(P_i,Q_i)\in\brmono(\ctxL(\cS))$ for all $1\leq i\leq n$.
Suppose that $P\step{\alpha} P'$.
By the semantics there must be a rule $\frac{\{x_i \step{\beta_i}y_i\mid i\in I\}}{f(x_1,\dots,x_n)\step{\alpha}t}$
and a substitution $\rho$ such that
\begin{itemize}
\item $\rho(x_i)=P_i$ for all $1\leq i\leq n$;
\item $t^\rho = P'$; and
\item for all premises $x_i \step{\beta_i} y_i \in H$
it holds that $\rho(x_i) \step{\beta_i} \rho(y_i)$.
\end{itemize}
We have to find $Q', Q''$ such that $Q\tausteps Q'\step{(\alpha)}Q''$ and
$(P,Q'),(P',Q'')\in\ctxL(\cS)$.
To this end we use the induction hypothesis and the index set of the rule
in order to construct substitutions $\eta$ and $\theta$ that satisfy the premise of Lemma~\ref{lem:superrob}.
\begin{itemize}
\item For $i\in I$ there is a premise $x_i\step{\beta_i}y_i \in H$.
By $(P_i,Q_i)\in\ctxLR$ and the induction hypothesis we obtain
$(P_i,Q_i)\in\brmono(\ctxL(\cS))$.
Then since $P_i \step{\beta_i} \rho(y_i)$
there exist $Q_i', Q_i''$ such that
$Q_i \tausteps Q_i' \step{(\beta_i)} Q_i''$ with
$(P_i,Q_i')\in\ctxL(\cS)$ and $(\rho(y_i), Q_i'')\in\ctxL(\cS)$.
We define $\eta(x_i) = Q_i$, $\theta(x_i)=Q_i'$ and $\theta(y_i)=Q_i''$.
\item For $i\notin I$ there is no premise, so we define $\eta(x_i)=\theta(x_i)=Q_i$.
\end{itemize}
These substitutions suffice to apply Lemma~\ref{lem:superrob}.
Let $Q'=f(x_1,\dots,x_n)^\theta$ and $Q''=t^\theta$.
By construction we have $f(x_1,\dots,x_n)^\eta = Q$,
so then the lemma gives us $Q \tausteps Q' \step{(\alpha)} Q''$.

It remains to show that $(P,Q'), (P',Q'')\in\ctxL(\cS)$.
For the first membership we need $(P_i,\theta(x_i))\in\ctxL(\cS)$
for all $1\leq i\leq n$.
This is indeed the case:
\begin{itemize}
\item for $i\in I$ we already deduced
$(P_i,Q_i')\in\ctxL(\cS)$;
\item for $i\notin I$ we use the assumption $(P_i,Q_i)\in\ctxL(\cR)$.
From $Q_i=\eta(x_i)=\theta(x_i)=Q_i'$ we deduce $(P_i,Q_i')\in \ctxL(\cR)$.
Then from $\cR \subseteq \cS$ we can obtain $\ctxL(\cR) \subseteq \ctxL(\cS)$ by monotonicity of $\ctxL$.
Hence $(P_i,Q_i')\in\ctxL(\cS)$.
\end{itemize}
 $(P,Q')\in\ctxL(\cS)$ by definition of contextual closure.

The second membership $(P',Q'')\in\ctxL(\cS)$ requires $(t^\rho,t^\theta)\in\ctxL(\cS)$.
Note that due to the format $\varsof(t) \subseteq \{x_1,\dots,x_n\} \cup \{y_i \mid i \in I\}$.
For substitutions $\rho, \theta$ we already established that $(\rho(x), \theta(x)) \in\ctxL(\cS)$ for all $x\in\varsof(t)$.
Then Proposition~\ref{prop:substitutionpreservescontext} yields $(t^\rho,t^\theta)=(P',Q'')\in\ctxL(\cS)$.
\end{proof}

\section{Examples}\label{app:examples}
The two examples that we consider are about CCS extended with the replication operator
from the $\pi$-calculus.
Some operational rules are the following:
\[
\begin{array}{cccc}
  \texttt{Repl}\ \frac{%
    !P | P \xrightarrow{\alpha} P'
  }{%
    !P \xrightarrow{\alpha} P'
  }
& \qquad
  \texttt{ParL}\ \frac
	{P \xrightarrow{\alpha} P'}
	{P\pc Q\xrightarrow{\alpha} P'\pc Q}
& \qquad
  \texttt{ParL}\ \frac
	{P \xrightarrow{\alpha} P'}
	{P+Q\xrightarrow{\alpha} P'}
& \qquad
  \texttt{Seq}\ \frac
	{}
	{\alpha.P \xrightarrow{\alpha} P}
\end{array}
\]

\begin{proof}[Proof for Example~\ref{ex:uptoeCRe}]
We have to show that $\cR=\{({!\tau.(a\pc\compl{a})},{!(\tau.a+\tau.\compl{a})})\}$
is a branching bisimulation up to $\uteCRe$.
For the left-to-right part of the game suppose that
\[
!\tau.(a\pc\compl{a}) \step{\tau} {!\tau.(a\pc\compl{a}) \pc (a\pc\compl{a})\pc (\tau.(a\pc\compl{a}))^n}
\]
and fix the answer of the right process to be
\[!(\tau.a+\tau.\compl{a}) \tausteps {!(\tau.a+\tau.\compl{a})} \step{(\tau)} {!(\tau.a+\tau.\compl{a})}
\,.
\]
We have to relate two pairs of processes, but the first one is trivial.
By reflexivity of $\gexp$ and extensiveness of $\ctxL$ we immediately get:
$!\tau.(a\pc\compl{a}) \mathrel{\gexp\rctxLR\lexp} {!(\tau.a+\tau.\compl{a})}$.

Relating ${!\tau.(a\pc\compl{a}) \pc (a\pc\compl{a})\pc (\tau.(a\pc\compl{a}))^n}$
to the term $!(\tau.a+\tau.\compl{a})$ by the relation $\gexp\rctxLR\lexp$ requires more work.
To this end we use that $!P\pc P \rsbis !P$ (and hence $!P\pc P \rgexp !P$)
to remove the $n$ parallel copies of $\tau.(a\pc \compl{a})$.
So we have ${!\tau.(a\pc\compl{a}) \pc (a\pc\compl{a})\pc (\tau.(a\pc\compl{a}))^n} \rgexp
{!\tau.(a\pc\compl{a})\pc(a\pc\compl{a})}$.

Now we use the context $\square \pc (a \pc \compl{a})$
and the relation $\cR$ to deduce
${!\tau.(a\pc\compl{a})\pc(a\pc\compl{a})}
\rctxR
{!(\tau.a+\tau.\compl{a})\pc(a\pc\compl{a})}$.
It remains to show that 
${!(\tau.a+\tau.\compl{a})\pc(a\pc\compl{a})} \rlexp {!(\tau.a+\tau.\compl{a})}$.
First we use the expansion property $!P\pc Q \rlexp {!P\pc \tau.Q}$
to obtain ${!(\tau.a+\tau.\compl{a})\pc(a\pc\compl{a})} \rlexp {!(\tau.a+\tau.\compl{a})\pc(\tau.a\pc\tau.\compl{a})}$.
Then, we use the properties $!(P + Q) \rsbis {!(P+Q)\pc P}$ and $P\pc Q \rsbis Q\pc P$ to obtain
${!(\tau.a+\tau.\compl{a})\pc(\tau.a\pc\tau.\compl{a})} \rlexp {!(\tau.a+\tau.\compl{a})}$.
Finally we obtain the desired conclusion by transitivity of $\lexp$.

The $\tau$-transitions of the $\compl{a}$-component of the sum are handled similarly,
so we finished the left-to-right part of the game.
An overview of the proof is given in Figure~\ref{fig:uptoeCRe}.
The right-to-left part of the game is an exercise for the reader.
It then follows that ${!\tau.(a\pc\compl{a})} \rbrbis {!(\tau.a+\tau.\compl{a})}$

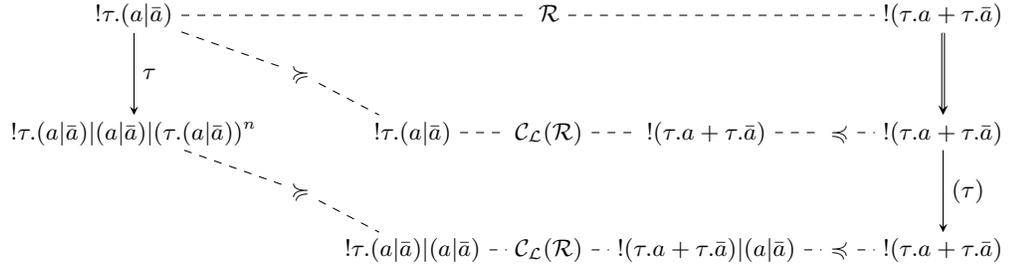
\begin{figure}[!h]
\centering
\begin{tikzpicture}[description/.style={fill=white,inner sep=2pt}]
\matrix (m) [matrix of math nodes, row sep=1em,
column sep=0.75em, text height=1.5ex, text depth=0.25ex]
{
 !\tau.(a\pc\compl{a})                                              &       &                                         & \cR    &                                         &       & !(\tau.a+\tau.\compl{a})  \\
                                                                    & \gexp &                                         &        &                                         &       &                           \\
 !\tau.(a\pc\compl{a})\pc(a\pc\compl{a})\pc(\tau.(a\pc\compl{a}))^{n} &       & !\tau.(a\pc\compl{a})                   & \ctxLR & !(\tau.a+\tau.\compl{a})                & \lexp & !(\tau.a+\tau.\compl{a})  \\
                                                                    & \gexp &                                         &        &                                         &       &                           \\
                                                                    &       & !\tau.(a\pc\compl{a})\pc(a\pc\compl{a}) & \ctxLR & !(\tau.a+\tau.\compl{a})\pc(a\pc\compl{a}) & \lexp & !(\tau.a+\tau.\compl{a})  \\
};

\path[-stealth]
	(m-1-7) edge [double]			   (m-3-7)
	(m-1-1) edge node [right] {$\tau$} (m-3-1)
	(m-3-7) edge node [right] {$(\tau)$} (m-5-7)
;
\path
	(m-1-1) edge [dashed] (m-1-4)
	(m-1-4) edge [dashed] (m-1-7)

	(m-3-1) edge [dashed] (m-4-2)
	(m-4-2) edge [dashed] (m-5-3)
	(m-5-3) edge [dashed] (m-5-4)
	(m-5-4) edge [dashed] (m-5-5)
	(m-5-5) edge [dashed] (m-5-6)
	(m-5-6) edge [dashed] (m-5-7)

	(m-1-1) edge [dashed] (m-2-2)
	(m-2-2) edge [dashed] (m-3-3)
	(m-3-3) edge [dashed] (m-3-4)
	(m-3-4) edge [dashed] (m-3-5)
	(m-3-5) edge [dashed] (m-3-6)
	(m-3-6) edge [dashed] (m-3-7)
;
\end{tikzpicture}
\caption{An overview of Example~\ref{ex:uptoeCRe}.}
\label{fig:uptoeCRe}
\end{figure}
\end{proof}

\begin{proof}[Proof for Example~\ref{ex:uptosCRs}]
We have to show that $\cR=\{(!(a + b), {!\tau.a}\pc{!\tau.b})\}$
is a branching bisimulation up to $\utsCRs$.
For the left-to-right part of the game
suppose that ${!(a+b)} \step{a} {!(a+b)\pc\dl\pc(a+b)^n}$.
and fix the answer of the right process to be
${!\tau.a}\pc{!\tau.b} \step{\tau} {(!\tau.a}\pc a) \pc {!\tau.b} \step{a} {(!\tau.a\pc\dl)\pc !\tau.b}$.
Then we have to relate two pairs by the relation $\sbis\rctxR\sbis$.
\begin{itemize}
\item For the first pair observe that $!(a+b) \rsbis {!(a+b) \pc a}$ and hence
    $!(a+b) \rsbis {!(a+b) \pc a}$.
    Since $!(a+b) \rcR {!\tau.a\pc{!\tau.b}}$ and $\square\pc a$ is a context we can derive
${!(a+b)} \pc a \rctxR {(!\tau.a\pc{!\tau.b}) \pc a}$.
    Lastly we have ${(!\tau.a\pc{!\tau.b}) \pc a} \rsbis (!\tau.a\pc a) \pc {!\tau.b}$ because
    of commutativity and associativity of $\pc$ with respect to $\sbis$.
\item For the second pair we can use that $!P\pc P \rsbis {!P}$ and $P\pc \dl \rsbis \dl$.
    Therefore we just have ${!(a+b)\pc\dl\pc(a+b)^n} \rsbis {!(a+b)}$.
    Since $\ctxL$ is extensive we know $!(a+b) \rctxR {!\tau.a\pc{!\tau.b}}$.
    Then finally ${!\tau.a\pc{!\tau.b}} \rsbis {(!\tau.a\pc\dl)\pc !\tau.b}$ is straightforward.
\end{itemize}
The case for the $b$-transition follows similarly.
In Figure~\ref{fig:uptosCRs} an overview is given;
the right-to-left part of the game is an exercise for the reader.
It then follows that $!(a+b) \rbrbis {!\tau.a\pc{!\tau.b}}$.

\begin{figure}[!h]
\centering
\begin{tikzpicture}[description/.style={fill=white,inner sep=2pt}]
\matrix (m) [matrix of math nodes, row sep=1em,
column sep=0.75em, text height=1.5ex, text depth=0.25ex]
{
 !(a+b)                 &        &             & \cR       &                            &        & !\tau.a\pc{!\tau.b}                      \\
                        & \sbis  &             &           &                            &        &                 \\
 !(a+b)\pc\dl\pc(a+b)^n &        & !(a+b)\pc a & \ctx(\cR) & (!\tau.a\pc{!\tau.b})\pc a & \sbis  & (!\tau.a\pc a) \pc {!\tau.b}        \\
                        & \sbis  &             &           &                            &        &                               \\
                        &        & !(a+b)      & \ctx(\cR) & !\tau.a\pc{!\tau.b}        & \sbis  & (!\tau.a\pc\dl)\pc{!\tau.b}                              \\
};

\path[-stealth]
	(m-1-7) edge [double]			   (m-3-7)

	(m-1-1) edge node [right] {$a$} (m-3-1)
	(m-3-7) edge node [right] {$(a)$} (m-5-7)
;
\path
	(m-1-1) edge [dashed] (m-1-4)
	(m-1-4) edge [dashed] (m-1-7)

	(m-1-1) edge [dashed] (m-2-2)
	(m-2-2) edge [dashed] (m-3-3)
	(m-3-3) edge [dashed] (m-3-4)
	(m-3-4) edge [dashed] (m-3-5)
	(m-3-5) edge [dashed] (m-3-6)
	(m-3-6) edge [dashed] (m-3-7)

	(m-3-1) edge [dashed] (m-4-2)
	(m-4-2) edge [dashed] (m-5-3)
	(m-5-3) edge [dashed] (m-5-4)
	(m-5-4) edge [dashed] (m-5-5)
	(m-5-5) edge [dashed] (m-5-6)
	(m-5-6) edge [dashed] (m-5-7)
        
;
\end{tikzpicture}
\caption{An overview diagram for Example~\ref{ex:uptosCRs}}
\label{fig:uptosCRs}
\end{figure}
\end{proof}

\newpage
\section{Expansion preorders for $\eta$ and delay bisimilarity}\label{app:expansion}

\subsection{Expansion preorders}

\begin{tabbing}
$\etamonogt(\cR) = \{ (P, Q) \mid$ \= for all $P'$ and all $\alpha$, if $P \step{\alpha} P'$ then\\
    \> \hspace{0.5cm} \= there exists $Q'$ such that $Q \step{(\alpha)} Q'$ and $P' \rcR Q'$; and\\
    \> for all $Q'$ and all $\alpha$, if $Q \step{\alpha} Q'$ then\\
    \> \> there exist $P',P'',P'''$ such that $P \tausteps P' \step{\alpha} P'' \tausteps P'''$\\
    \> \> with $P' \rcR Q$ and $P''' \rcR Q' \}\,.$
\end{tabbing}

\begin{tabbing}
$\dmonogt(\cR) = \{ (P, Q) \mid$ \= for all $P'$ and all $\alpha$, if $P \step{\alpha} P'$ then\\
    \> \hspace{0.5cm} \= there exists $Q'$ such that $Q \step{(\alpha)} Q'$ and $P' \rcR Q'$; and\\
    \> for all $Q'$ and all $\alpha$, if $Q \step{\alpha} Q'$ then\\
    \> \> there exist $P',P''$ such that $P \tausteps P' \step{\alpha} P''$\\
    \> \> with $P'' \rcR Q' \}\,.$
\end{tabbing}

\begin{lemma}
The function $\utetaeRe$ is $\etamono$-respectful and
the function $\utdeRe$ is $\dmono$-respectful.
\end{lemma}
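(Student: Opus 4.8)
The plan is to mimic the proof of Lemma~\ref{lem:expansionrespectful} (the branching case), handling $\eta$ and delay separately but in parallel, since the only differences lie in whether the intermediate state must be related and whether $\tau$-steps are allowed after the observable step. By Lemma~\ref{lem:simulationisenough} it suffices to treat the \emph{simulation} halves: since $\utetaeRe$ and $\utdeRe$ are clearly symmetric, and $\etamono = \etasmono \meet (\rev \circ \etasmono \circ \rev)$ and $\dmono = \dsmono \meet (\rev \circ \dsmono \circ \rev)$, respectfulness for $\etamono$ (resp.\ $\dmono$) follows from $\etasmono$-respectfulness (resp.\ $\dsmono$-respectfulness) of the symmetric functions. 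Then, invoking Lemma~\ref{lem:reformulation}, the goal becomes: for arbitrary $\cR \subseteq \cS$ with $\cR \subseteq \etasmono(\cS)$, show ${\etaexp}\rcR{\etalexp} \subseteq \etasmono({\etaexp}\rcS{\etalexp})$, and analogously for delay with $\dexp$ in place of $\etaexp$.

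First I would carry out the $\eta$ case. Take $P \mathrel{\etaexp \cR \etalexp} Q$, so there are $P_0, Q_0$ with $P \mathrel{\etaexp} P_0 \mathrel{\cR} Q_0 \mathrel{\etalexp} Q$, and suppose $P \step{\alpha} P'$. Because $P \mathrel{\etaexp} P_0$, the $\eta$-expansion game gives $P_0 \step{(\alpha)} P_0'$ with $P' \mathrel{\etaexp} P_0'$ (and if $\alpha = \tau$ we may have the degenerate case $P' \mathrel{\etaexp} P_0$, handled as in the branching proof using $\cR \subseteq \cS$ and reflexivity of $\etaexp, \etalexp$). From $P_0 \mathrel{\cR} Q_0$ and $\cR \subseteq \etasmono(\cS)$ we get $Q_0 \tausteps Q_0' \step{(\alpha)} Q_0'' \tausteps Q_0'''$ with $P_0 \mathrel{\cS} Q_0'$ and $P_0' \mathrel{\cS} Q_0'''$ (here the $\eta$ flavour forces $P_0 \mathrel{\cS} Q_0'$, i.e.\ the intermediate state before the observable step is related; note there is no constraint linking $P_0'$ to $Q_0''$). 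We then push $Q_0 \tausteps Q_0'''$ through $Q_0 \mathrel{\etalexp} Q$: first $Q_0 \tausteps Q_0'$ gives $\hat Q$ with $Q \tausteps \hat Q$ and $Q_0 \mathrel{\etalexp} \hat Q$; then the remaining weak transition $Q_0' \step{(\alpha)} Q_0''' $ (or a case split when it degenerates) is answered from $Q_0' \mathrel{\etalexp} \hat Q$ — the expansion preorder $\etaexp$ answers a challenge transition by a \emph{weak} transition — yielding $\hat Q \tausteps Q' \step{(\alpha)} Q'' \tausteps Q'''$ with $Q_0' \mathrel{\etalexp} Q'$ (so the intermediate state is tracked, matching $\eta$) and $Q_0''' \mathrel{\etalexp} Q'''$. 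Composing, $Q \tausteps Q' \step{(\alpha)} Q'' \tausteps Q'''$ with $P \mathrel{\etaexp} P_0 \mathrel{\cS} Q_0' \mathrel{\etalexp} Q'$ and $P' \mathrel{\etaexp} P_0' \mathrel{\cS} Q_0''' \mathrel{\etalexp} Q'''$, which is exactly the $\etasmono$ requirement on $({\etaexp}\rcS{\etalexp})$. The delay case is the same bookkeeping but \emph{simpler}: one drops every clause relating intermediate states (both in $\dsmono$ and in $\dexp$), so only the $P''' \mathrel{\cS} Q_0'''$-style endpoint conditions need tracking, and likewise $\dexp$ permits answering by $Q \tausteps Q' \step{\alpha} Q''$ only — there is no trailing $\tausteps$ in the $\dexp$ answer, but since $\dsmono$ also has no trailing $\tausteps$ this matches.

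The genuine content is entirely in the correct threading of the expansion preorders through the simulation clauses, i.e.\ checking that the flavour-specific side conditions (intermediate-state relatedness for $\eta$, absence thereof for delay; trailing $\tau$-closure for $\eta$, absence for delay) are consistently maintained at every composition step. I expect the main obstacle to be the careful case analysis on degenerate transitions — when $\alpha = \tau$ and a step is answered by "stay put" at either the $\etaexp/\dexp$ level or the $\cR$ level, one must verify that the required pairs still land in ${\etaexp}\rcS{\etalexp}$ (resp.\ ${\dexp}\rcS{\dexp}$), which is where the hypothesis $\cR \subseteq \cS$ and reflexivity of the expansion preorders are used, exactly as in the proof of Lemma~\ref{lem:expansionrespectful}. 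No new ideas beyond that proof are needed; the right-to-left halves of the simulation games are symmetric and can be omitted by Lemma~\ref{lem:simulationisenough}.
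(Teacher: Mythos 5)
Your proposal is correct and follows exactly the route the paper intends: the paper's own proof of this lemma consists of the single remark that both cases ``follow the same routine as the proof of Lemma~\ref{lem:expansionrespectful}'', and your write-up is precisely that routine, with the flavour-specific side conditions (intermediate-state relatedness and trailing $\tau$-closure for $\eta$, neither for delay) threaded through correctly. The only blemish is the slip ``$Q_0$'' for ``$Q_0'$'' after absorbing the leading $\tau$-steps into $\hat{Q}$ --- a typo also present in the paper's proof of Lemma~\ref{lem:expansionrespectful} --- which you implicitly correct in the very next step.
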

\begin{proof}
Both proofs follow the same routine as the proof of Lemma~\ref{lem:expansionrespectful}.
\end{proof}

\section{Proofs for Section~\ref{sec:coalg}}

In this section we prove Theorem~\ref{thm:simulation-respectful}. To this end, we need additional background and definitions related
to simulations and (lax) relation liftings. 

Throughout this appendix, we fix an ordered functor $B \colon \Set \rightarrow \Set$, and
a functor $H_\Sigma$ defined from a signature $\Sigma$ as in Section~\ref{sec:coalg}.
It is useful to observe that the functor $B \times \Id$ is ordered, with $\sqsubseteq_{BX}$ on
the first component and the diagonal $\Delta_X$ on the second; we denote this order by $\sqsubseteq_{BX} \times \Delta_X$
(note that this is not Cartesian product). 
Thus, $(u,x) \sqsubseteq_{BX} \times \Delta_X (v,y)$ iff $u \sqsubseteq_{BX} v$ and $x=y$.

For $(X,f)$ and $(Y,g)$ coalgebras, a \emph{lax coalgebra homomorphism} is a map
$h \colon X \rightarrow Y$ such that $Bh \circ f \sqsubseteq_{BY} g
\circ h$, which is diagrammatically depicted as
$$
\xymatrix@C=1.3cm{
	X \ar[d]_{f} \ar[r]^h  \ar@{}[dr]|{\sqsubseteq_{BY}}
		& Y \ar[d]^g \\
	BX \ar[r]_{Bh}
		& BY
}
$$
If instead $Bh \circ f \sqsupseteq_{BY} g \circ h$ then $h$ is called an \emph{oplax coalgebra homomorphism}.

\begin{example}
	Let $BX = (\pow X)^{\act}$, ordered as usual (Example~\ref{ex:ordered-lts}).
	Given coalgebras $(X,f)$ and $(Y,g)$, a
	lax coalgebra homomorphism is a functional simulation (from left to right): a map $h \colon X \rightarrow Y$
	such that, if $x \xrightarrow{\alpha} x'$ then $h(x) \xrightarrow{\alpha} h(x')$. 
	Instead $h$ is an oplax homomorphism if $h(x) \xrightarrow{\alpha} y'$ implies $\exists x'. \, x \xrightarrow{\alpha} x'$
	and $h(x') = y'$. 
\end{example}

Simulations can equivalently be characterised as follows, which is useful in proofs. 
\begin{lemma}\label{lm:spans}
	For $B$-coalgebras $(X,f)$ and $(Y,g)$, and relations $R,S \subseteq X \times Y$, 
	we have that $R \subseteq s(S)$ iff there exists a map $r \colon R \rightarrow B(S)$
	such that $f \circ \pi_1^R \sqsubseteq_{BX} B\pi_1^S \circ r$ and
	$B\pi_2^S \circ r \sqsubseteq_{BY} g \circ \pi_2^R$:
	$$
	\xymatrix@C=1.3cm{
		X \ar[d]_{f} \ar@{}[dr]|{\sqsubseteq_{BX}}
			& R \ar[l]_{\pi_1^R} \ar[d]^r \ar[r]^{\pi_2^R} \ar@{}[dr]|{\sqsubseteq_{BY}}
			& Y \ar[d]^g \\
		BX 
			& BS \ar[l]^{B\pi_1^S} \ar[r]_{B\pi_2^S}
			& BY
	}
	$$
	In particular, $R$
	is a simulation iff there exists a coalgebra structure $r \colon R \rightarrow BR$ such that
	the projections $\pi_1, \pi_2$ are, respectively, an oplax and a lax coalgebra homomorphism.
\end{lemma}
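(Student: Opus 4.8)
The plan is to prove this by directly unfolding the definition of the simulation operator $s$ and of the lax relation lifting, and then packaging pointwise witnesses into a single function using the axiom of choice. First I would spell out, for a pair $(x,y) \in X \times Y$, what $(x,y) \in s(S)$ means: by Definition~\ref{def:coalg-sim}, $(x,y) \in s(S)$ iff $(f(x),g(y)) \in \Rel_\sqsubseteq(B)(S)$; unfolding $\Rel_\sqsubseteq(B)(S) = {\sqsubseteq_{BX}} \relcmp \Rel(B)(S) \relcmp {\sqsubseteq_{BY}}$ and then the definition of $\Rel(B)$, this is equivalent to the existence of some $z \in B(S)$ with $f(x) \sqsubseteq_{BX} B\pi_1^S(z)$ and $B\pi_2^S(z) \sqsubseteq_{BY} g(y)$. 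This elementwise criterion is the bridge to the span picture.

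For the forward direction, assume $R \subseteq s(S)$. For each $(x,y) \in R$ the criterion above yields a witness $z_{(x,y)} \in B(S)$; invoking the axiom of choice, define $r \colon R \to B(S)$ by $r(x,y) = z_{(x,y)}$. Then for every $(x,y) \in R$ we have $f(\pi_1^R(x,y)) = f(x) \sqsubseteq_{BX} B\pi_1^S(r(x,y))$ and $B\pi_2^S(r(x,y)) \sqsubseteq_{BY} g(y) = g(\pi_2^R(x,y))$, which are exactly the two required inequalities, read pointwise. The backward direction is immediate: given such an $r$, for any $(x,y) \in R$ the element $z = r(x,y) \in B(S)$ satisfies $f(x) \sqsubseteq_{BX} B\pi_1^S(z)$ and $B\pi_2^S(z) \sqsubseteq_{BY} g(y)$, so $(x,y) \in s(S)$ by the criterion, hence $R \subseteq s(S)$.

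For the final ``in particular'' claim I would instantiate $S = R$: a relation $R$ is a simulation iff $R \subseteq s(R)$, which by the equivalence just proved means there is $r \colon R \to B(R)$ with $f \circ \pi_1 \sqsubseteq_{BX} B\pi_1 \circ r$ and $B\pi_2 \circ r \sqsubseteq_{BY} g \circ \pi_2$. Comparing with the definitions of lax and oplax coalgebra homomorphism (with source coalgebra $(R,r)$): the first inequality is precisely the statement that $\pi_1 \colon (R,r) \to (X,f)$ is an oplax homomorphism, and the second that $\pi_2 \colon (R,r) \to (Y,g)$ is a lax homomorphism. I do not expect a real obstacle here; the proof is essentially bookkeeping. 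The only two points that need care are the appeal to choice that turns the family $(z_{(x,y)})_{(x,y) \in R}$ into a genuine function $r$ --- this is exactly where the ``elementwise'' and ``span-based'' viewpoints are reconciled --- and keeping the orientation of $\sqsubseteq$ consistent when matching the two inequalities to the (op)lax conditions, noting that $\pi_1$ comes out oplax while $\pi_2$ comes out lax.
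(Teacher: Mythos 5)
Your proof is correct and follows essentially the same route as the paper's: unfold $s(S)$ via the lax relation lifting to an elementwise existence of a witness $z \in B(S)$, package the witnesses into a function $r$ by choice for one direction, and read off $z = r(x,y)$ for the other; the instantiation $S=R$ for the ``in particular'' part is also the intended one.
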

\begin{proof}
	Spelling out the definition of $s$, we have that $R \subseteq s(S) = (f \times g)^{-1}(\sqsubseteq_{BX} \relcmp \Rel(B)(S) \relcmp \sqsubseteq_{BY})$ if and only if for each $(x,y) \in R$, there exists $z \in B(S)$ 
	such that $f(x) \sqsubseteq_{BX} B\pi_1^S(z) \, \Rel(B)(S) B\pi_2^S \sqsubseteq_{BY} g(y)$. 
	The equivalence follows easily: if $R \subseteq s(R)$ then, for every pair $(x,y) \in R$, 
	choose such an element $z$ and let $r(x,y) = z$. Conversely, if $r \colon R \rightarrow B(S)$
	satisfies the required properties, choose $z$ to be $r(x,y)$ for each $(x,y) \in R$ to show
	$R \subseteq s(S)$. 
\end{proof}

In the theory of abstract GSOS and bialgebras, it is well-known that 
an abstract GSOS specification $\lambda \colon H_\Sigma (B \times \Id) \Rightarrow B T_\Sigma$
gives rise to a distributive law $\overline{\lambda} \colon T_\Sigma (B \times \Id) \Rightarrow (B \times \Id) T_\Sigma$
of the free monad $T_\Sigma$ over the copointed functor $B \times \Id$, meaning that $\lambda$ satisfies 
certain axioms related to the monad structure and the projection $B \times \Id \rightarrow \Id$. 
In fact, this correspondence is one-to-one.

Further, we have that $(X,a,f)$ is a model of $\rho$ iff $(X, a^{*}, \langle f, \id \rangle)$ is a \emph{$\lambda$-bialgebra},
which means that the following diagram commutes:
\begin{equation}\label{eq:bialg}
\xymatrix@C=1.3cm{
T_\Sigma(X) \ar[d]_{a^{*}} \ar[r]^-{H_\Sigma \langle f, \id \rangle}
	& T_\Sigma (BX \times X) \ar[r]^-{\overline{\lambda}_X}
	& B T_\Sigma(X) \times T_\Sigma(X) \ar[d]^{B a^{*} \times a^{*}} \\
X \ar[rr]^-{\langle f, \id \rangle} 
	& & BX \times X
}
\end{equation}

Following from the results of~\cite{BonchiPPR17}, there is the following `lax' version of the above. 
First, note that both the domain and codomain of a component $\overline{\lambda}_X$ carry a pre-order: 
$\Rel(H_\Sigma)(\sqsubseteq_{BX} \times \Delta_X)$ and $\sqsubseteq_{BT_\Sigma} \times \Delta_{T_\Sigma}$, respectively.
We say $\overline{\lambda}$ is \emph{monotone} if it is monotone with respect to these orders. 
Further, we say $(X,a^{*}, \langle f, \id \rangle)$ is a \emph{lax bialgebra} 
if the diagram~\ref{eq:bialg} commutes up to $\sqsubseteq_{BX} \times \Delta_X$, that is,
$\langle f, \id \rangle \circ a^{*} \mathrel{\sqsubseteq_{BX} \times \Delta_X} {(B \times \Id)a^{*} \circ \overline{\lambda}_X \circ T_\Sigma \langle f, \id \rangle}$. It is an \emph{oplax} bialgebra 
if it is lax for the converse order, i.e., $\langle f, \id \rangle \circ a^{*} \mathrel{\sqsupseteq_{BX} \times \Delta_X} {(B \times \Id)a^{*} \circ \overline{\lambda}_X \circ T_\Sigma \langle f, \id \rangle}$.

\begin{lemma}\label{lm:lax-model-to-bialg}
	If $\lambda \colon H_\Sigma (B \times \Id) \Rightarrow B T_\Sigma$ is monotone, 
	then $\overline{\lambda}$ is monotone as well. 
	Moreover, if $(X,a,f)$ is a lax model then $(X, a^{*}, \langle f, \id)$ is a lax bialgebra. 
\end{lemma}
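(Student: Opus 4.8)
The plan is to prove the two claims in order, both by tracking how the relevant orders are built up from the order on $BX$ through the functorial and monad constructions. For the first claim — monotonicity of $\overline{\lambda}$ from monotonicity of $\lambda$ — I would recall the explicit construction of $\overline{\lambda}$ from $\lambda$ as a distributive law of the free monad $T_\Sigma$ over the copointed functor $B \times \Id$. Concretely, $\overline{\lambda}_X \colon T_\Sigma(BX \times X) \to BT_\Sigma(X) \times T_\Sigma(X)$ is defined by induction on the term structure: on a variable $(u,x) \in BX \times X$ it uses the copointing, and on $\sigma(t_1,\dots,t_n)$ it combines the recursively computed $\overline{\lambda}$-values of the $t_i$ via $\lambda_X$ together with the algebra structure $\kappa$ of $T_\Sigma$. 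I would then argue monotonicity by induction on term structure: the base case is immediate since the order on $BX \times X$ is $\sqsubseteq_{BX}$ on the first component and the diagonal on the second, and in the inductive step, if the arguments increase in $\Rel(H_\Sigma)(\sqsubseteq_{BX}\times\Delta_X)$, then the recursively computed values increase in $\sqsubseteq_{BT_\Sigma}\times\Delta_{T_\Sigma}$ by IH, the first components increase under $\lambda_X$ by the assumed monotonicity of $\lambda$, and the $\Id$-components are unchanged; naturality and functoriality of the remaining pieces preserve the order. This should be routine once the inductive definition of $\overline{\lambda}$ is spelled out, so I would cite the construction from~\cite{BonchiPPR17,TP97} rather than redoing it.

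For the second claim, I would unfold both the definition of lax $\lambda$-model and of lax bialgebra. A lax model gives $f \circ a \sqsubseteq_{BX} Ba^{*} \circ \lambda_X \circ H_\Sigma\langle f,\id\rangle$, and I want $\langle f,\id\rangle \circ a^{*} \mathrel{(\sqsubseteq_{BX}\times\Delta_X)} (B\times\Id)a^{*}\circ\overline{\lambda}_X\circ T_\Sigma\langle f,\id\rangle$. The strategy is induction on terms $t \in T_\Sigma(X)$. For $t = x$ a variable: both sides reduce to $\langle f,\id\rangle(x)$ using the unit law of the distributive law (that $\overline{\lambda}$ agrees with $\langle f,\id\rangle$ on variables via $\eta$), so equality — hence $\sqsubseteq_{BX}\times\Delta_X$ — holds. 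For $t = \sigma(t_1,\dots,t_n)$: on the $\Id$-component, both sides give $t^{a^{*}}$ so they are equal; the work is on the $B$-component. Here I unfold $a^{*}(\sigma(t_1,\dots,t_n)) = a(\sigma(a^{*}(t_1),\dots,a^{*}(t_n)))$, apply the lax-model inequality at the point $\sigma(a^{*}(t_1),\dots,a^{*}(t_n))$, then rewrite $\lambda_X \circ H_\Sigma\langle f,\id\rangle$ applied there using the inductive hypotheses $\langle f,\id\rangle(a^{*}(t_i)) \mathrel{\sqsubseteq_{BX}\times\Delta_X} (B\times\Id)a^{*}(\overline{\lambda}_X(T_\Sigma\langle f,\id\rangle(t_i)))$, combined with the monotonicity of $\lambda$ from the first part to push the inequality through $\lambda_X$, and finally the inductive clause of $\overline{\lambda}$ to repackage the result as $(B\times\Id)a^{*}(\overline{\lambda}_X(T_\Sigma\langle f,\id\rangle(\sigma(t_1,\dots,t_n))))$. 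Transitivity of $\sqsubseteq_{BX}$ then closes the step.

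The main obstacle I anticipate is bookkeeping rather than conceptual difficulty: getting the inductive clause of $\overline{\lambda}$ exactly right so that the recombination in the inductive step matches syntactically, and being careful that monotonicity of $\lambda$ is invoked at the correct component (only the $BX$-part moves, the $X$-part is fixed by the diagonal). In particular one must check that composing the lax-model inequality (which lives in $\sqsubseteq_{BX}$) with the lifted inductive hypotheses (which live in $\sqsubseteq_{BX}\times\Delta_X$ on the arguments) is legitimate — this is where monotonicity of $\lambda_X$, and the fact that $Ba^{*}$ and $B\pi_1$ are monotone because $B$ is an ordered functor, do the real work. Since the analogous statement is essentially in~\cite{BonchiPPR17}, I would present this as a streamlined induction and defer the most tedious diagram chases to that reference, flagging only the points where monotonicity of $\lambda$ (our added hypothesis, via the first part of the lemma) is needed.
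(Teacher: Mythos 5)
Your argument is correct, but it is worth noting that the paper does not actually carry out this proof: it disposes of the lemma in one line by citing Lemmas~11.2 and~11.5 of~\cite{BonchiPPR17}, which establish exactly these two facts (monotonicity of the induced distributive law, and the passage from lax models to lax bialgebras). What you have written is a self-contained reconstruction of the argument behind those cited lemmas, and it is the right one: both claims go by induction on terms, with the base case handled by the unit law of $\overline{\lambda}$ and the inductive step driven by monotonicity of $\lambda_X$ together with the fact that $B$ (hence $Ba^{*}$, $B\mu_X$) is ordered. The one step you describe only loosely --- ``the inductive clause of $\overline{\lambda}$ to repackage the result'' --- deserves to be made explicit if you write this out: after pushing the inductive hypotheses through $\lambda_X$ you are left with $Ba^{*}\circ\lambda_X$ applied to $\sigma$ of the pairs $(Ba^{*}(v_i),a^{*}(s_i))$, and to identify this with $Ba^{*}\circ B\mu_X\circ\lambda_{T_\Sigma X}$ applied to $\sigma((v_1,s_1),\dots,(v_n,s_n))$ you need naturality of $\lambda$ at the map $a^{*}\colon T_\Sigma X\to X$ combined with the algebra law $a^{*}\circ\mu_X=a^{*}\circ T_\Sigma a^{*}$; also note that monotonicity of $\lambda$ is applicable there only because the $\Id$-components coming from the inductive hypotheses are genuinely equal (the order on that component is the diagonal), which your IH does supply. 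With that made precise, your proof is complete and arguably more informative than the paper's citation; the trade-off is only length versus reliance on the reference.
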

\begin{proof}
	This follows from Lemma 11.2 and Lemma 11.5 in~\cite{BonchiPPR17}.
\end{proof}
Of course, analogous results hold for oplax models, which follows simply by reversing the order. 

Further, we will use a basic property of relation lifting.
\begin{lemma}\label{lm:locally-monotone}
	For any functor $B \colon \Set \rightarrow \Set$, 
	relation lifting $\Rel(B)$ is locally monotone, that is, if $f,g \colon X \rightarrow BY$ 
	are such that $f \sqsubseteq_{BY} g$ then $Bf \mathrel{\Rel(B)(\sqsubseteq_{BY})} Bg$. 
\end{lemma}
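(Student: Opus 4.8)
The plan is to produce, uniformly over $BX$, a witness in $B({\sqsubseteq_{BY}})$ for membership in $\Rel(B)(\sqsubseteq_{BY})$, obtained functorially from a single factorisation at the level of $\Set$. The reading of the statement to keep in mind is that ``$Bf \mathrel{\Rel(B)(\sqsubseteq_{BY})} Bg$'' abbreviates pointwise relatedness of the two maps $BX \rightarrow BBY$, i.e.\ $(Bf(w), Bg(w)) \in \Rel(B)(\sqsubseteq_{BY})$ for every $w \in BX$.

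First I would regard the preorder $\sqsubseteq_{BY} \subseteq BY \times BY$ as a set equipped with its two projections $\pi_1, \pi_2 \colon {\sqsubseteq_{BY}} \rightarrow BY$. Since $f \sqsubseteq_{BY} g$ means $f(x) \sqsubseteq_{BY} g(x)$ for all $x \in X$, the pairing $\langle f, g \rangle \colon X \rightarrow BY \times BY$ factors through the inclusion of $\sqsubseteq_{BY}$; write $h \colon X \rightarrow {\sqsubseteq_{BY}}$ for the factorisation, so that $\pi_1 \circ h = f$ and $\pi_2 \circ h = g$. Applying $B$ and using functoriality gives $Bh \colon BX \rightarrow B({\sqsubseteq_{BY}})$ with $B\pi_1 \circ Bh = Bf$ and $B\pi_2 \circ Bh = Bg$.

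Finally I would unfold the definition $\Rel(B)(\sqsubseteq_{BY}) = \{(u,v) \mid \exists z \in B({\sqsubseteq_{BY}}).\ B\pi_1(z) = u \text{ and } B\pi_2(z) = v\}$: for an arbitrary $w \in BX$, taking $z := Bh(w)$ yields $B\pi_1(z) = Bf(w)$ and $B\pi_2(z) = Bg(w)$, hence $(Bf(w), Bg(w)) \in \Rel(B)(\sqsubseteq_{BY})$, which is the claim. There is no genuine obstacle; the only subtlety is the dual role of a preorder as both a relation (input to $\Rel(B)$) and a set-with-projections (input to $B$), and these two uses are reconciled precisely by the definition of $\Rel(B)$.
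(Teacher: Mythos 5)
Your proof is correct. The paper states this lemma without proof (it is invoked as a ``basic property of relation lifting''), and your argument---factoring $\langle f,g\rangle$ through the subset ${\sqsubseteq_{BY}} \subseteq BY \times BY$, applying $B$ to the factorisation, and reading off the witness $Bh(w) \in B({\sqsubseteq_{BY}})$ required by the span-based definition of $\Rel(B)$---is exactly the standard way to establish it, and matches how the lemma is used in the proof of Theorem~\ref{thm:simulation-respectful} (with $T_\Sigma$ in place of $B$).
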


Now, we are ready to prove the main result on respectfulness. 

\begin{proof}[Proof of Theorem~\ref{thm:simulation-respectful}]
 	Suppose $R \subseteq s(S)$ and $R \subseteq S$. By Lemma~\ref{lem:reformulation}, it suffices to prove
 	that $\ctx_{a,b}(R) \subseteq \ctx_{a,b}(S)$. 
 	
 	First, by Lemma~\ref{lm:spans}, we have $r \colon R \rightarrow B(S)$ 
 	such that $f \circ \pi_1^R \sqsubseteq_{BX} B\pi_1^S \circ r$ and
	$B\pi_2^S \circ r \sqsubseteq_{BY} g \circ \pi_2^R$:
	\begin{equation}\label{eq:progress1}
	\xymatrix@C=1.3cm{
		X \ar[d]_{f} \ar@{}[dr]|{\sqsubseteq_{BX}}
			& R \ar[l]_{\pi_1^R} \ar[d]^r \ar[r]^{\pi_2^R} \ar@{}[dr]|{\sqsubseteq_{BY}}
			& Y \ar[d]^g \\
		BX 
			& BS \ar[l]^{B\pi_1^S} \ar[r]_{B\pi_2^S}
			& BY
	}
	\end{equation}
	By assumption $R \subseteq S$; call the inclusion map $i \colon R \hookrightarrow S$. 
	Together with~\eqref{eq:progress1} we get:
	\begin{equation}\label{eq:progress2}
	\xymatrix@C=1.5cm{
		X \ar[d]_{\langle f, \id \rangle} \ar@{}[dr]|{\sqsubseteq_{BX} \times \Delta_X}
			& R \ar[l]_{\pi_1^R} \ar[d]|{\,\langle r, i \rangle} \ar[r]^{\pi_2^R} \ar@{}[dr]|{\sqsubseteq_{BY} \times \Delta_Y}
			& Y \ar[d]^{\langle g, \id \rangle} \\
		BX \times X 
			& BS \times S \ar[l]^{(B \times \Id) \pi_1^S} \ar[r]_{(B \times \Id) \pi_2^S}
			& BY \times Y
	}
	\end{equation}
	We apply $T_\Sigma$ to the above diagram to get the middle part of the following, which
	commutes up to $\Rel(T_\Sigma)(\sqsubseteq_{BX} \times \Delta_X)$ respectively 
	$\Rel(T_\Sigma)(\sqsubseteq_{BY} \times \Delta_Y)$ by Lemma~\ref{lm:locally-monotone}:
	$$
	{\small
	\xymatrix@C=1.3cm{
		X \ar[dd]^f \ar@{}[ddr]|{\sqsubseteq_{BX} \times \Delta_X}
			& T_\Sigma X \ar[d]_{T_\Sigma\langle f, \id \rangle} \ar@{}[dr]|{\Rel(T_\Sigma)(\sqsubseteq_{BX} \times \Delta_X)}
			\ar[l]_{a^{*}}
			& T_\Sigma R \ar[l]_{T_\Sigma\pi_1^R} \ar[d]|{\,\langle r, i \rangle} \ar[r]^{T_\Sigma \pi_2^R} \ar@{}[dr]|{\Rel(T_\Sigma)(\sqsubseteq_{BY} \times \Delta_Y)}
			& T_\Sigma Y \ar[d]^{T_\Sigma\langle g, \id \rangle} 
				\ar[r]^{b^{*}} \ar@{}[ddr]|{\sqsubseteq_{BY} \times \Delta_Y}
			& Y \ar[dd]_g 
			\\
			& T_\Sigma BX \times X \ar[d]^{\overline{\lambda}_X} 
			& T_\Sigma BS \times S \ar[l]^{T_\Sigma (B \times \Id) \pi_1^S} \ar[r]_{T_\Sigma(B \times \Id) \pi_2^S}
				\ar[d]^{\overline{\lambda}_S} 
			& T_\Sigma BY \times Y 
				\ar[d]^{\overline{\lambda}_Y} 
			& 
			\\
		(B \times \Id) X
			& (B \times \Id)T_\Sigma X \ar[l]^{(B \times \Id)b^{*}}
			& (B \times \Id)T_\Sigma S \ar[l]^{(B \times \Id)T_\Sigma \pi_1^S}  \ar[r]_{(B \times \Id)T_\Sigma \pi_2^S}
			& (B \times \Id)T_\Sigma Y \ar[r]_{(B \times \Id)b^{*}}
			& (B \times \Id)Y
	}
	}
	$$
	The lower squares commute by naturality of $\overline{\lambda}$. 
	The left and right lax commutativity follow from the assumption on $(X,a,f)$ and $(Y,b,g)$ being
	oplax and lax respectively, together with Lemma~\ref{lm:lax-model-to-bialg}.
	Since, further, $\overline{\lambda}_X$ is monotone 
	(as well as $(B \times \Id)\pi_1^S$ and $(B \times \Id)\pi_2^S$ by virtue of $B \times \Id$ being ordered) 
	we can compose these inequalities (and equalities) to obtain
	those in the following diagram. 
	$$
	\xymatrix@C=1.3cm{
		X \ar[d]^f \ar@{}[drr]|{\sqsubseteq_{BX} \times \Delta_X}
			& T_\Sigma X 
			\ar[l]_{a^{*}}
			& T_\Sigma R \ar[l]_{T_\Sigma\pi_1^R} \ar[d]|{\overline{\lambda}_X \circ \langle r, i \rangle} \ar[r]^{T_\Sigma \pi_2^R} \ar@{}[drr]|{\sqsubseteq_{BY} \times \Delta_Y}
			& T_\Sigma Y 
				\ar[r]^{b^{*}}
			& Y \ar[d]_g 
			\\
		(B \times \Id) X
			& (B \times \Id)T_\Sigma X \ar[l]^{(B \times \Id)b^{*}}
			& (B \times \Id)T_\Sigma S \ar[l]^{(B \times \Id)T_\Sigma \pi_1^S}  \ar[r]_{(B \times \Id)T_\Sigma \pi_2^S}
			& (B \times \Id)T_\Sigma Y \ar[r]_{(B \times \Id)b^{*}}
			& (B \times \Id)Y
	}
	$$
	To arrive at the contextual closure, we need to take the direct images along $a^{*} \circ T_\Sigma \pi_1$ etc. 
	To this end, note that there is a surjective map $q_R \colon T_\Sigma R \rightarrow \ctx_{a^{*}, b^{*}}(R)$
	such that $\pi_1 \circ q_R = a^{*} \circ T_\Sigma \pi_1^R$ and 
	$\pi_2 \circ q_R = b^{*} \circ T_\Sigma \pi_2^R $, with $\pi_1$, $\pi_2$ the projections of $\ctx_{a^{*},b^{*}}(R)$
	(and a similar map $q_S$ for $S$). 
	The map $q_R$ has a right inverse $q_R^{-1}$, i.e., $q_R \circ q_R^{-1} = \id_{T_\Sigma R}$. 
	Note that $\pi_1  = \pi_1 \circ q_R \circ q_R^{-1} = a^{*} \circ T_\Sigma \pi_1^R \circ q_R^{-1}$
	and $\pi_2  = \pi_2 \circ q_R \circ q_R^{-1} = b^{*} \circ T_\Sigma \pi_2^R \circ q_R^{-1}$.

	Thus, the upper and lower parts in the diagram below commute:
	$$
	\xymatrix@C=1.3cm{
			& & \ctx_{a^{*},b^{*}}(R) \ar@/_10pt/[dll]_{\pi_1} \ar[d]|{q_R^{-1}} \ar@/^10pt/[drr]^{\pi_2} & &  \\
		X \ar[d]^f \ar@{}[drr]|{\sqsubseteq_{BX} \times \Delta_X}
			& T_\Sigma X 
			\ar[l]_{a^{*}}
			& T_\Sigma R \ar[l]_{T_\Sigma\pi_1^R} \ar[d]|{\overline{\lambda}_X \circ \langle r, i \rangle} \ar[r]^{T_\Sigma \pi_2^R} \ar@{}[drr]|{\sqsubseteq_{BY} \times \Delta_Y}
			& T_\Sigma Y 
				\ar[r]^{b^{*}}
			& Y \ar[d]_g 
			\\
		(B \times \Id) X
			& (B \times \Id)T_\Sigma X \ar[l]^{(B \times \Id)b^{*}}
			& (B \times \Id)T_\Sigma S \ar[l]^{(B \times \Id)T_\Sigma \pi_1^S}  \ar[r]_{(B \times \Id)T_\Sigma \pi_2^S}
			 \ar[d]|{(B \times \Id)q_S}
			& (B \times \Id)T_\Sigma Y \ar[r]_{(B \times \Id)b^{*}}
			& (B \times \Id)Y \\
			& & (B \times \Id)\ctx_{a^{*},b^{*}}(S) \ar@/^10pt/[ull]^{(B\times \Id)\pi_1} \ar@/_10pt/[urr]_{(B \times \Id)\pi_2} & &  
	}
	$$	It follows from Lemma~\ref{lm:spans} that $\ctx_{a^{*},b^{*}}(R) \subseteq s'(\ctx_{a^{*},b^{*}}(S))$, where $s'$ is the functional
	associated to the ordered functor $B \times \Id$ (via $\sqsubseteq_{BX} \times \Delta_X$). 
	It is easy to establish (e.g.,~\cite{rot17}) that this is equivalent 
	to $\ctx_{a^{*},b^{*}}(R) \subseteq s(\ctx_{a^{*},b^{*}}(S))$ and $\ctx_{a^{*},b^{*}}(R) \subseteq \ctx_{a^{*},b^{*}}(S)$;
	the former is what is needed. 
 \end{proof}

\end{appendix}

\end{document}